\newcommand{\nc}{\newcommand}
\nc{\rnc}{\renewcommand} \nc{\nev}{\newenvironment}
\nc{\W}[1]{\text{$\textup{W}[#1]$}}
\nc{\FPT}{\textup{FPT}}
\nc{\fpt}{\textup{fpt}}
\newlength{\probwidth}
\nc{\prob}[3][9]{
\begin{center}
  \normalfont\fbox{
   \begin{tabular}[t]{
     rp{#1cm}}\textit{Instance:}&#2. \\
     \textit{Problem:}&#3
   \end{tabular}}
\end{center}}
\nc{\pprob}[4][9]{
\begin{center}
   \normalfont\fbox{
    \begin{tabular}[t]{
     rp{#1cm}}\textit{Instance:}&#2. \\
     \textit{Parameter:}&#3. \\
     \textit{Problem:}&#4
   \end{tabular}}
\end{center}}
\nc{\nprob}[4][9]{
\begin{center}
  \normalfont\fbox{

\addtolength{\probwidth}{#1cm}\parbox{\probwidth}{\textsc{#2}\\\hspace*{1.5em}
     \begin{tabular}[t]{
      rp{#1cm}}\textit{Instance:}&#3. \\
      \textit{Problem:}&#4
     \end{tabular}}}
\end{center}}
\nc{\npprob}[5][9]{
\begin{center}
  \normalfont\fbox{

\addtolength{\probwidth}{#1cm}\parbox{\probwidth}{\textsc{#2}\\\hspace*{1.5em}
    \begin{tabular}[t]{
     rp{#1cm}}\textit{Instance:}&#3. \\
     \textit{Parameter:}&#4. \\
     \textit{Problem:}&#5
    \end{tabular}}}
\end{center}}
\nc{\nppxrob}[5][9]{ \normalfont\fbox{

\addtolength{\probwidth}{#1cm}\parbox{\probwidth}{\textsc{#2}\\\hspace*{1.5em}
   \begin{tabular}[t]{
    rp{#1cm}}\textit{Instance:}&#3. \\
    \textit{Parameter:}&#4. \\
    \textit{Problem:}&#5
   \end{tabular}}}}
\nc{\nppprob}[5][4]{
\begin{center}
  \normalfont\fbox{

\addtolength{\probwidth}{#1cm}\parbox{\probwidth}{\textsc{#2}\\\hspace*{1.5em}
    \begin{tabular}[t]{
     rp{#1cm}}\textit{Instance:}&#3. \\
     \textit{Parameter:}&#4. \\
     \textit{Problem:}&#5
    \end{tabular}}}
\end{center}}
\nc{\dotcup}{\;\dot\cup\;}
\nc{\noptprob}[6][9]{
\begin{center}
  \normalfont\fbox{

\addtolength{\probwidth}{#1cm}\parbox{\probwidth}{\textsc{#2}\\\hspace*{0em}
    \begin{tabular}[t]{
     rp{10cm}}\textit{Instance:}&#3. \\
     \textit{Solution:}&#4. \\
     \textit{Cost:}&#5. \\
     \textit{Goal:}&#6.
    \end{tabular}}}
\end{center}}
\newtheorem{theo}{Theorem}[section]
\newtheorem{lemma}[theo]{Lemma}
\newtheorem{definition}[theo]{Definition}
\newtheorem{que}[theo]{Question}
\rnc{\c}{\mathbf{{c}}} 
\rnc{\a}{\mathbf{{a}}}
\rnc{\b}{\mathbf{{b}}} 
\rnc{\i}{\mathbf{{i}}}
\rnc{\j}{\mathbf{{j}}} 
\rnc{\u}{\mathbf{{u}}}
\rnc{\v}{\mathbf{{v}}}
\nc{\B}{\mathbf{{B}}}
\nc{\s}{\mathbf{{s}}}
\rnc{\S}{\mathbf{{S}}}
\nc{\C}{\mathbf{{C}}}
\nc{\U}{\mathbf{{U}}}
\nc{\E}{\mathbf{{E}}}
\nc{\G}{\mathbf{{G}}}
\nc{\D}{\mathbf{{D}}}
\nc{\unsat}{\mathbf{{UNSAT}}}
\author{Bingkai Lin \\  Nanjing University, China \\ \texttt{lin@nju.edu.cn} }
\title{Constant Approximating $k$-Clique is W[1]-hard}
\begin{document}

\maketitle

\begin{abstract}
For every graph $G$, let $\omega(G)$ be the largest size of complete subgraph in $G$. This paper presents a simple  algorithm which, on input a graph $G$, a positive integer $k$ and a small constant $\epsilon>0$, outputs a graph $G'$ and an integer $k'$ in $2^{\Theta(k^5)}\cdot |G|^{O(1)}$-time such that (1) $k'\le 2^{\Theta(k^5)}$, (2) if $\omega(G)\ge k$, then $\omega(G')\ge k'$, (3) if $\omega(G)<k$, then $\omega(G')< (1-\epsilon)k'$. This implies that  no $f(k)\cdot |G|^{O(1)}$-time algorithm  can distinguish between the cases $\omega(G)\ge k$ and $\omega(G)<k/c$ for any constant $c\ge 1$ and  computable function $f$, unless $FPT= W[1]$. 
\end{abstract}
\section{Introduction}
Given a simple graph $G$ and a positive integer $k$, the task of $k$-Clique problem  is to decide whether $\omega(G)\ge k$.  In  parameterized complexity~\cite{dowfel99,flugro06}, the $k$-Clique problem with $k$ as its parameter is a canonical $W[1]$-complete problem~\cite{downey1995fixed}. Unless $W[1]= FPT$, it has  no $f(k)\cdot |G|^{O(1)}$-time algorithm (FPT-algorithm) for any computable function $f : \mathbb{N}\to\mathbb{N}$. This problem  has been used as a starting point in many reductions and thus plays a fundamental role in the area of parameterized complexity. Yet, it is still not known whether constant approximating $k$-Clique is also $W[1]$-hard. More precisely, we consider the following  question:
\begin{que}\label{que:w1appclique}
Is there any algorithm which, on input a graph $G$ and a positive integer $k$, outputs a new graph $G'$ and a positive integer $k'$ in $f(k)\cdot |G|^{O(1)}$-time for some computable function $f:\mathbb{N}\to\mathbb{N}$ such that,
\begin{itemize}
\item $k'=g(k)$ for some computable function $g:\mathbb{N}\to\mathbb{N}$,
\item if $\omega(G)\ge k$, then $\omega(G')\ge k'$,
\item if $\omega(G)< k$, then $\omega(G')< k'/2$?
\end{itemize} 
\end{que}
The question above is motivated by the study of FPT-approximation algorithms for $k$-Clique. For any $c\ge 1$, we say an algorithm is a $c$-FPT-approximation algorithm for $k$-Clique  if on input a graph $G$  it outputs a clique of size $\omega(G)/c$  in $G$ in $f(\omega(G))\cdot |G|^{O(1)}$-time for some computable function $f : \mathbb{N}\to\mathbb{N}$. Whether there exists such an algorithm has been repeatedly raised in the literature~\cite{marx08,fellowsdata,chegro07,downey2013fundamentals}.    Previous results for FPT-inapproximability of $k$-Clique are under strong assumptions which already have a gap~\cite{chalermsook2017gap,bonesc13}. Proving such results based on standard assumptions is an interesting and important open question. It is wildly believed that the technique needed to resolve this question is closely related to a PCP-theorem for parameterized complexity~\cite{chegro07}:
\begin{quote}
Non-approximability results in the classical framework were proved for the CLIQUE problem using the PCP-theorem, so it might be necessary to obtain a parameterized version of the PCP-theorem to solve these questions.
\end{quote}

This paper gives the first positive answer to Question~\ref{que:w1appclique}. An immediate corollary of our result is the non-existence of FPT-approximation algorithm for the $k$-Clique problem under the standard parameterized complexity hypothesis  $W[1]\neq FPT$.
\begin{theo}
Assuming that $k$-Clique has no FPT-algorithm,  there is no  FPT-algorithm that can approximate $k$-Clique to any constant.
\end{theo}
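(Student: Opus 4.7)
The plan is to deduce this corollary from the gap-producing reduction stated in the abstract via a standard threshold argument. Suppose for contradiction that for some constant $c\ge 1$ there exists a $c$-FPT-approximation algorithm $\mathcal{A}$ for $k$-Clique, running in time $f(\omega(G))\cdot |G|^{O(1)}$ and always returning a clique of size at least $\omega(G)/c$. I will use this to build an FPT decision algorithm for $k$-Clique, contradicting the assumption.

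Choose $\epsilon>0$ with $1-\epsilon<1/c$, for instance $\epsilon=1-1/(c+1)$. Given an instance $(G,k)$ of $k$-Clique, first run the gap-producing reduction on $(G,k,\epsilon)$ to obtain $(G',k')$ in time $2^{\Theta(k^5)}\cdot|G|^{O(1)}$, with $k'\le 2^{\Theta(k^5)}$; then invoke $\mathcal{A}$ on $G'$ and accept if and only if the returned clique has size at least $k'/c$. Correctness follows from the two guarantees of the reduction: if $\omega(G)\ge k$ then $\omega(G')\ge k'$, hence $\mathcal{A}$ returns a clique of size $\ge \omega(G')/c \ge k'/c$ and we accept; if $\omega(G)<k$ then $\omega(G')<(1-\epsilon)k' < k'/c$, so every clique of $G'$ (in particular the one $\mathcal{A}$ returns) has size strictly less than $k'/c$ and we reject.

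For the running time, the reduction contributes $2^{\Theta(k^5)}\cdot|G|^{O(1)}$ and the call to $\mathcal{A}$ contributes $f(\omega(G'))\cdot|G'|^{O(1)}$, which is at most $f(2^{\Theta(k^5)})\cdot 2^{O(k^5)}\cdot|G|^{O(1)}$ since $\omega(G')\le k'\le 2^{\Theta(k^5)}$ and $|G'|\le 2^{\Theta(k^5)}\cdot|G|^{O(1)}$. The total is of the form $h(k)\cdot|G|^{O(1)}$ for a computable $h$, which is an FPT algorithm for $k$-Clique and yields the desired contradiction. The entire technical burden rests in the gap reduction promised by the abstract; the theorem itself is an immediate consequence, and I expect the only non-trivial aspect of this derivation to be choosing $\epsilon$ as a function of $c$ and verifying that the bounds on $|G'|$ and $\omega(G')$ keep the composed running time within the FPT regime.
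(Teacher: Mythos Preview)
Your derivation is correct and is exactly how the paper intends the theorem to follow from the gap reduction; the paper itself gives no separate argument beyond the remark in ``Putting all together'' that graph products amplify the initial gap to any constant, so your write-up is simply filling in the standard threshold step.

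One small technical point: to bound the running time of $\mathcal A$ you write ``since $\omega(G')\le k'$,'' but the abstract only promises $\omega(G')\ge k'$ in the yes case and $\omega(G')<(1-\epsilon)k'$ in the no case---no upper bound on $\omega(G')$ is stated. The bound does in fact hold, because the constructed $G'$ is partitioned into exactly $k'=2|\mathbb F|^{2kh}$ independent sets (the groups $V_{\vec\alpha_1,\ldots,\vec\alpha_k,\vec\beta_1,\ldots,\vec\beta_k}$ and $V_{\vec\alpha_1,\ldots,\vec\alpha_k,i}$), so every clique has size at most $k'$; you should say this. Alternatively, without inspecting the construction you can run $\mathcal A$ with a time budget of $f(k')\cdot|G'|^{C}$ (taking $f$ nondecreasing without loss of generality) and accept if it fails to halt, since in the no case $\omega(G')<k'$ forces termination within that budget.
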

The  main contribution of this paper is to show how to create a constant gap for $\omega(G)$ from a $W[1]$-hard problem with no gap.

\subsection{Overview of the reduction}
Let us illustrate the idea  using  a toy $k$-Vector-Sum problem and the Walsh-Hadamard  code. Throughout this paper, we will work on finite field  $\mathbb{F}$ with characteristic $2$. For any $\vec{v}_1,\ldots,\vec{v}_k\in\mathbb{F}^d$ and $\vec{a}_1,\ldots,\vec{a}_k\in\mathbb{F}^d$, let $H(\vec{v}_1,\ldots,\vec{v}_k)_{\vec{a}_1,\ldots,\vec{a}_k}=\sum_{i\in[k]} \vec{a}_i\cdot\vec{v}_i$ be the Walsh-Hadamard  code of $\vec{v}_1,\ldots,\vec{v}_k$. Here $\vec{a}_i\cdot\vec{v}_i$ denotes the dot product of vectors $\vec{a}_i$ and $\vec{v}_i$. Given $k$ vector $\vec{v}_1,\ldots,\vec{v}_k\in\mathbb{F}^d$ and a target vector $\vec{t}\in\mathbb{F}^d$, we  want to test whether $\sum_{i\in[k]}\vec{v}_i=\vec{t}$. For that sake, we construct a constraint  satisfaction problem (CSP) on variables $\{x_{\vec{a}_1,\ldots,\vec{a}_k}:\vec{a}_1,\ldots,\vec{a}_k\in\mathbb{F}^d\}$. Any assignment to these variables can be seen as a long vector   $\vec{x}\in\mathbb{F}^{kd}$, which is supposed to be the Walsh-Hadamard  code of  $\vec{v}_1,\ldots,\vec{v}_k$. Then we do the following tests.
\begin{itemize}
\item[(T1)]   To ensure that the vector $\vec{x}\in\mathbb{F}^{kd}$ is a Walsh-Hadamard  code   of some vector $\vec{v}_1,\ldots,\vec{v}_k\in\mathbb{F}^d$, we  check if  $\vec{x}_{\vec{a}_1+\vec{b}_1,\ldots,\vec{a}_k+\vec{b}_1}=\vec{x}_{\vec{a}_1,\ldots,\vec{a}_k}+\vec{x}_{\vec{b}_1,\ldots,\vec{b}_k}$ for  $\vec{a}_1,\ldots,\vec{a}_k,\vec{b}_1,\ldots,\vec{b}_k\in\mathbb{F}^d$.
\item[(T2)]  Check if $\vec{x}_{\vec{a}_1,\ldots,\vec{a}_i+\vec{a},\ldots,\vec{a}_k}-\vec{x}_{\vec{a}_1,\ldots,\vec{a}_k}=\vec{v}_i\cdot\vec{a}$ for   $i\in[k]$, $\vec{a}\in\mathbb{F}^d$ and  $\vec{a}_1,\ldots,\vec{a}_k\in\mathbb{F}^d$.
\item[(T3)] Check if $\vec{x}_{\vec{a}_1+\vec{a},\ldots,\vec{a}_k+\vec{a}}-\vec{x}_{\vec{a}_1,\ldots,\vec{a}_k}=\vec{a}\cdot\vec{t}$ for  $\vec{a},\vec{a}_1,\ldots,\vec{a}_k\in\mathbb{F}^d$.
\end{itemize}
By the linearity testing~\cite{blum1993self}, for any assignment $\vec{x}\in \mathbb{F}^{kd}$, if $\vec{x}$ satisfies $\delta$-fractions of  constraints in (T1),  then $\vec{x}$ is at most $(1-\delta)$-far from the Walsh-Hadamard  code of some vectors $\vec{u}_1,\ldots,\vec{u}_k\in\mathbb{F}^d$. That is, for a $\delta$-fraction of $(\vec{a}_1,\ldots,\vec{a}_k)\in\mathbb{F}^{dk}$, $\vec{x}_{\vec{a}_1,\ldots,\vec{a}_k}=H(\vec{u}_1,\ldots,\vec{u}_k)_{\vec{a}_1,\ldots,\vec{a}_k}$. Next, if $\vec{u}_1+\cdots+\vec{u}_k\neq\vec{t}$, then for a $(1-1/|\mathbb{F}|)$-fraction of $\vec{a}\in\mathbb{F}^d$, $\vec{a}\cdot (\vec{u}_1+\cdots+\vec{u}_k)\neq\vec{a}\cdot\vec{t}$. If for some $i\in[k]$, $\vec{u}_i\neq\vec{v}_i$, then  for a $(1-1/|\mathbb{F}|)$-fraction of $\vec{a}\in\mathbb{F}^d$, $\vec{a}\cdot \vec{u}_i\neq\vec{a}\cdot\vec{v}_i$. We conclude that, if $\vec{v}_1+\cdots+\vec{v}_k=\vec{t}$, then there exists a vector $\vec{x}$ satisfying all the constraints. If $\vec{v}_1+\cdots+\vec{v}_k\neq\vec{t}$, then for any assignment $\vec{x}$, one of the following must hold:
\begin{itemize}
\item  a constant fraction of constraints in (T1) are not satisfied.
\item for some $i\in[k]$, a constant fraction of constraints in (T2) with respect to $i$ are not satisfied.
\item a constant fraction of constraints in (T3) are not satisfied.
\end{itemize}

Then we construct a graph $G$ from the CSP instance using a modified FGLSS-reduction~\cite{feige1996interactive}. The vertex set of $G$ consists of two parts $A$ and $B$.

The vertices of $A$ are corresponding to assignments to variables $\{x_{\vec{a}_1,\ldots,\vec{a}_k}:\vec{a}_1,\ldots,\vec{a}_k\in\mathbb{F}^d\}$ of the CSP instance. Two vertices in $A$ are adjacent unless they are corresponding to  assignments that are not consistent, i.e., assigning different values to the same variable, or there exists a (T2) or (T3) constraint between them and they do not satisfy that constraint. Note that there are $|\mathbb{F}|^{1+dk}$ vertices in $A$. They can be partitioned into $|\mathbb{F}|^{dk}$ independent sets of size $|\mathbb{F}|$.

The vertices in $B$ are corresponding to assignments to three variables satisfying constraints in (T1). Two vertices in $B$ are adjacent unless they are corresponding to  inconsistent assignments. Note that the vertex set of $B$ can be partitioned into $|\mathbb{F}|^{2dk}$ disjoint subsets, each forms an independent set of size $|\mathbb{F}|^2$.

We add an edge between any two vertices in $A$ and $B$ if  the assignments corresponding to these vertice are consistent.
Since  the sizes of $A$ and $B$ are not balanced, we assign to each vertex in $A$  weight $|\mathbb{F}|^{kd}$ and each vertex in $B$ weight $1$.

If $\vec{v}_1+\cdots+\vec{v}_k=\vec{t}$, then $G$ contains a clique of weight $2|\mathbb{F}|^{2kd}$. This clique consists of $|\mathbb{F}|^{kd}$ vertices from $A$ and $|\mathbb{F}|^{2kd}$ vertices from $B$. 
If $\vec{v}_1+\cdots+\vec{v}_k\neq\vec{t}$, then for any clique $X$ in $G$, one of the following must hold:
\begin{itemize}
\item[(i)] $|X\cap B|\le (1-\epsilon)|\mathbb{F}|^{2kd}$,
\item[(ii)] there exists $\vec{a}\in\mathbb{F}^d$ and $i\in[k]$ such that the variable set $V$ can be partitioned into two disjoint sets $V=V_0\cup V_1$ 
with $V_1=\{x_{\vec{a}_1,\ldots,\vec{a}_i+\vec{a},\ldots,\vec{a}_k} : x_{\vec{a}_1,\ldots,\vec{a}_k}\in V_0\}$. The assignment corresponding to $X$ satisfies only a $(1-\epsilon)$-fraction of (T2) constraints between $V_0$ and $V_1$,
\item[(iii)] there exists $\vec{a}\in\mathbb{F}^d$ such that the variable set $V$ can be partitioned into two disjoint sets $V=V_0\cup V_1$ 
with $V_1=\{x_{\vec{a}_1+\vec{a},\ldots,\vec{a}_k+\vec{a}} : x_{\vec{a}_1,\ldots,\vec{a}_k}\in V_0\}$. The assignment corresponding to $X$ satisfies only a $(1-\epsilon)$-fraction of (T3) constraints between $V_0$ and $V_1$. 
\end{itemize}
Either (ii) or (iii) implies that $|X\cap A|\le [1/2+(1-\epsilon)1/2]|\mathbb{F}^{kd}|$. So, in summary, when $\vec{v}_1+\cdots+\vec{v}_k\neq\vec{t}$, we have for every clique of $G$, either $|X\cap B|\le (1-\epsilon)|\mathbb{F}|^{2kd}$ or $|X\cap A|\le (1-\epsilon/2)|\mathbb{F}^{kd}|$. In both cases, every clique in $G$ has at most $(2-\epsilon/2)|\mathbb{F}|^{2kd}$ weight.

\bigskip

There are two problems needed to be solved:
\begin{itemize}
\item[(P1)] In the real $k$-Vector-Sum problem, we are given $k$ sets $V_1,\ldots,V_k$ of vectors  instead of $k$ vectors. How to test whether $\vec{x}_{\vec{a}_1,\ldots,\vec{a}_i+\vec{a},\ldots,\vec{a}_k}-\vec{x}_{\vec{a}_1,\ldots,\vec{a}_k}=\vec{v}_i\cdot\vec{a}$ for the same vector $\vec{v}_i\in V_i$? In some bad scenario, it is possible that for each  $\vec{a}$, there exists a vector $\vec{v}_{\vec{a}}\in V_i$ such that $\vec{x}_{\vec{a}_1,\ldots,\vec{a}_i+\vec{a},\ldots,\vec{a}_k}-\vec{x}_{\vec{a}_1,\ldots,\vec{a}_k}=\vec{v}_{\vec{a}}\cdot\vec{a}$.
\item[(P2)] The $k$-Vector-Sum problem is $W[1]$-hard when $d=k^{\Omega(1)}\log n$. Applying our reduction directly would take at least $|\mathbb{F}|^{kd}\ge n^{k^{\Omega(1)}}$ time, which we cannot afford. 
\end{itemize}
We handle these problems by   sampling $\ell$ matrices $A_1,\ldots,A_\ell\in\mathbb{F}^{h\times d}$ with $h=k^2$, and  replacing each vector $\vec{v}\in V_i$ by an $\ell$-tuple $(A_1 \vec{v},\ldots,A_\ell\vec{v})$ of $h$-dimension vectors.  For every $\vec{a}_1,\ldots,\vec{a}_k\in\mathbb{F}^{h}$, the value of the variable $x_{\vec{a}_1,\ldots,\vec{a}_k}$ becomes a  vector in $\mathbb{F}^\ell$ instead of just an element of $\mathbb{F}$. The number of variables become $|\mathbb{F}|^{kh}$ instead of $|\mathbb{F}|^{kd}$.  Since  $|\mathbb{F}|=O(1)$, the reduction can be done in FPT-time if $\ell\le O(\log n+h)$.
To ensure that the constraints in (T2) still work, we show that when $\ell\ge \Omega(\log n+h)$, with high probability, for all $\vec{a}\in\mathbb{F}^h$ and distinct $\vec{v},\vec{u}\in V_i$, $(\vec{a}^T A_1\vec{v},\ldots,\vec{a}^T A_\ell\vec{v})\neq (\vec{a}^T A_1\vec{u},\ldots,\vec{a}^T A_\ell \vec{u})$ and the bad scenario in (P1) will not occur. For constraints in (T3), we show that with high probability, for all distinct $\vec{t},\vec{t}'\in\mathbb{F}^d$,  $( A_1\vec{t},\ldots, A_\ell\vec{t})\neq (A_1\vec{t}',\ldots, A_\ell\vec{t}')$.

\subsection{Related work}
The $k$-Clique problem is one of the first known NP-hard problems in \cite{kar72}. It was  showned that approximating $k$-Clique to a factor of  $n^{1-\epsilon}$  is also NP-hard after a  long line of research~\cite{feige1996interactive,bellare1993efficient,bellare1994improved,goldwasser1998introduction,feige2000two,hastad1996clique,zuckerman2006linear}. As pointed out in~\cite{chen2019constant}, the classical inapproximability results of $k$-Clique inevitably produce instances with large $\omega(G)$. However, in parameterized complexity, we consider instances with  small $\omega(G)$ which does not depend on the size of $G$. In order to show that $k$-Clique is still hard to approximate when $k$ is small, a natural idea is to use the PCP-theorem~\cite{arora1998proof,arora1998probabilistic} to obtain a gap for the SAT problem and then use the method of compressing to reduce the optimum solution size~\cite{hajkhakor13}. Unfortunately, since the PCP-theorem causes a polylogarithmic blow-up  in the size of SAT instance, this approach cannot rule out FPT-approximation for $k$-Clique. 

To circumvent this, researches  used stronger hypothesis to obtain a gap for the SAT problem. Bonnet et al.~\cite{bonesc13} used ETH~\cite{impagliazzo2001problems} and the linear PCP conjecture to show  constant FPT-inapproximability of $k$-Clique. Assuming Gap-ETH~\cite{dinur2016mildly,manurangsi2016birthday}, it was shown that there is no $o(k)$-FPT-approximation for $k$-Clique~\cite{chalermsook2017gap}.

In \cite{pih},  a weaker conjecture called Parameterized Inapproximability Hypothesis (PIH) was postulated. PIH states that binary CSP parameterized by the number of variables has no constant FPT-approximation. It is easy to see that PIH implies  $k$-Clique has no constant FPT-approximation. Interestingly, it is not known if the hardness of approximation of $k$-Clique implies PIH~\cite{feldmann2020survey}.

Assuming a conjecture called DEG-2-SAT, Khot and Shinkar~\cite{khot2016hardness} used a different approach to rule out FPT-approximation for $k$-Clique. Although the conjecture turned out to be false~\cite{kayal2014solvability}, their work is still  inspiring. The idea of  multiplying  the input instance with  matrices in our reduction is from their paper.

In recent years, several gap-creating techniques  have been successfully used to  show FPT  inapproximabilities~\cite{lin2018parameterized,karthik2017parameterized,lin2019simple,wlodarczyk2020parameterized,karthik2020hardness}. 
We refer the reader to~\cite{feldmann2020survey} for a survey of these results.
 
\section{Preliminaries}
For every vector $\vec{v}=(v_1,\ldots,v_d)\in\mathbb{F}^d$ and $i\in[d]$, let $\vec{v}[i]=v_i$.
For every $\vec{x},\vec{y}\in\mathbb{F}^n$, we use $\vec{x}\cdot\vec{y}=\sum_{i\in[n]}\vec{x}[i]\vec{y}[i]$ to denote the dot product of $\vec{x}$ and $\vec{y}$. For $X\subseteq \mathbb{F}^d$ and $\vec{v}\in\mathbb{F}^d$, let $X+\vec{v}=\{\vec{x}+\vec{v} : \vec{x}\in X\}$. For $\vec{a}\in\mathbb{F}^n$ and $\vec{b}\in\mathbb{F}^m$, let $\vec{a}\circ\vec{b}$ be the  the result of concatenating $\vec{a}$ and $\vec{b}$.
For any $d n$ vectors $\vec{v}=(v_1,\ldots,v_{dn})\in\mathbb{F}^{dn}$ and $\vec{a}=(a_1,\ldots,a_d)\in\mathbb{F}^d$, let
\[
F(\vec{a},\vec{v})=(\sum_{i\in [d]}a_iv_i,\ldots,\sum_{i\in [d]}a_iv_{i+jd},\ldots,\sum_{i\in [d]}a_iv_{i+(n-1)d})\in\mathbb{F}^n.
\]

For all $n\in\mathbb{N}$, let $\vec{0}_n$ and $\vec{1}_n$ be the $n$-dimension all-zero vector and all-one vector respectively.
\begin{definition}[Distance]
For every $d\in\mathbb{N}$ and $\vec{x},\vec{y}\in \mathbb{F}^d$, let 
\[
dist(\vec{x},\vec{y})=\frac{|\{i\in[d] : \vec{x}[i]\neq \vec{y}[i]\}|}{d}.
\]
For ease of notation, let $dist(\vec{x})=dist(\vec{x},\vec{0}_d)$.
\end{definition}
Let $G$ and $H$ be two groups and $+$ be the group operator.
For any $\delta\in[0,1]$ and $f,g: G\to H$, we say $f$  is $\delta$-far from $g$ if $\Pr_{x}[f(x)\neq g(x)]= \delta$. A function $f:G\to H$ is a homomorphism if $f(x)+f(y)=f(x+y)$ for all $x,y\in G$.
The follow theorem is from \cite{blum1993self,goldreich2016lecture}.
\begin{theo}[Linearity test]
If a function $f: G\to H$ satisfies
$\Pr_{{x},{y}}[f(x)+f(y)= f({x}+{y})]\ge (1-\delta/2)$ for some small $\delta$, then there exists a  homomorphism function $g : G\to H$ such that $f$ is at most $\delta$-far from $g$, i.e.,
$\Pr_{x}[f(x)= g(x)]\ge (1-\delta)$.
\end{theo}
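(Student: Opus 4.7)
The plan is to construct $g$ by majority decoding: for each $x \in G$, let $g(x)$ be the value in $H$ that is most often taken by the random variable $f(x+y)-f(y)$ as $y$ ranges uniformly over $G$, breaking ties arbitrarily. With this definition, the argument splits into three steps: (a) show that for every $x$ the plurality value is attained by at least a $(1-\delta)$-fraction of $y$; (b) deduce that $f$ and $g$ agree on at least a $(1-\delta)$-fraction of $G$; (c) show that $g$ is a group homomorphism.

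For step (a), I would fix $x$ and use a ``double sampling'' trick. Drawing independent uniform $y, z \in G$, both pairs $(y,z)$ and $(x+y,z)$ are uniform on $G \times G$ (the second because translation by $(x,0)$ is a bijection), so the hypothesis and a union bound give
\[
\Pr_{y,z}\bigl[\,f(y)+f(z)=f(y+z) \text{ and } f(x+y)+f(z)=f(x+y+z)\,\bigr] \ge 1-\delta.
\]
On this event, subtracting the two equalities yields $f(x+y)-f(y) = f(x+y+z)-f(y+z)$; setting $y_1 := y$ and $y_2 := y+z$ (again uniform on $G^2$), this says the collision probability $\sum_v p_x(v)^2$ of $p_x(v) := \Pr_y[f(x+y)-f(y)=v]$ is at least $1-\delta$. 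Since $\sum_v p_x(v)^2 \le \max_v p_x(v)$, the plurality mass $p_x(g(x))$ is at least $1-\delta$. For step (b), on any $x$ with $f(x) \ne g(x)$ the value $f(x)$ cannot coincide with the plurality $g(x)$, so $\Pr_y[f(x+y)-f(y) = f(x)] \le \delta$, whence $\Pr_y[f(x)+f(y) \ne f(x+y)] \ge 1-\delta$. Averaging and comparing with the $\delta/2$ global bound gives $\Pr_x[f(x) \ne g(x)] \le \delta/(2(1-\delta)) \le \delta$ for $\delta \le 1/2$.

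Step (c) is the delicate part. For fixed $x, y$, I would draw independent uniform $z_1, z_2 \in G$ and consider five events: (i) $g(x) = f(x+z_1)-f(z_1)$, (ii) $g(y) = f(y+z_2)-f(z_2)$, (iii) $g(x+y) = f(x+y+z_1+z_2)-f(z_1+z_2)$, each holding with probability $\ge 1-\delta$ by step (a) (noting that $z_1+z_2$ is uniform whenever $z_1, z_2$ are), together with (iv) $f(z_1)+f(z_2)=f(z_1+z_2)$ and (v) $f(x+z_1)+f(y+z_2)=f(x+y+z_1+z_2)$, each holding with probability $\ge 1-\delta/2$ by the hypothesis applied to the uniform pairs $(z_1,z_2)$ and $(x+z_1, y+z_2)$. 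A union bound gives simultaneous occurrence with probability $\ge 1-4\delta > 0$ for $\delta < 1/4$, and on that intersection algebraic cancellation in (iii)--(v) forces $g(x+y) = g(x)+g(y)$. Since this identity is deterministic, it must hold for all $x, y$. The main obstacle I anticipate is the bookkeeping in step (c): one must verify that every auxiliary pair drawn from $(z_1, z_2)$ is still uniformly distributed on $G \times G$ so that the $\delta/2$ hypothesis applies, and one must pick the two translations $(z_1,z_2) \mapsto (z_1,z_2)$ and $(z_1,z_2) \mapsto (x+z_1, y+z_2)$ carefully so that the five events compose algebraically into the homomorphism identity.
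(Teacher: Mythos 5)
The paper does not prove this statement at all: it is quoted as a known result from the cited sources (Blum--Luby--Rubinfeld and Goldreich's lecture notes), so there is no internal proof to compare against. Your argument is precisely the standard BLR proof from those sources --- majority (plurality) decoding of $g$, the collision-probability bound $\max_v p_x(v)\ge\sum_v p_x(v)^2\ge 1-\delta$ via the two translated uniform pairs, self-correction to get $\Pr_x[f(x)\ne g(x)]\le \delta/(2(1-\delta))\le\delta$, and the five-event union bound forcing $g(x+y)=g(x)+g(y)$ --- and it is correct as written; the only implicit requirements are $\delta<1/4$ (so that $1-4\delta>0$) and commutativity of $H$, both of which are harmless here since the theorem assumes ``small $\delta$'' and the paper applies it with $H=\mathbb{F}^n$.
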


\begin{definition}[Constraint Satisfaction Problem (CSP)]
Given an alphabet $\Gamma$, an instance of  constraint satisfaction problem contains a set of variables $V=\{v_1,\ldots,v_k\}$  and  constraints $\{C_1,C_2,\ldots,C_m\}$. For every $i\in [m]$, 
 $C_i=(\vec{s}_i,R_i)$, where $\vec{s}_i=(v_{j_1},\ldots,v_{j_{\ell_i}})$ is an $\ell_i$-tuple of variables for some $\ell_i\in[k]$ and $R_i\subseteq\Gamma^{\ell_i}$.
 The goal is to find  an assignment $\sigma : V\to \Gamma$  such that
\begin{itemize}
\item for all $i\in[m]$, $\sigma(\vec{s}_i)\in R_i$.
\end{itemize}
\end{definition}

In parameterized complexity, the hypothesis of $W[1]\neq FPT$ states that no algorithm  can, on input a graph $G$ and a positive integer $k$, decide whether $\omega(G)\ge k$ in $f(k)\cdot |G|^{O(1)}$ time for any computable function $f:\mathbb{N}\to\mathbb{N}$. 
A parameterized problem $L$ is $W[1]$-hard if there is a reduction  from $k$-Clique to this problem such that for every instance $(G,k)$ of $k$-Clique,  the reduction outputs an instance $(x,k')$ of $L$ in $f(k)\cdot |G|^{O(1)}$-time for some computable function $f:\mathbb{N}\to\mathbb{N}$ and:
\begin{itemize}
\item $(G,k)$ is a yes-instance of $k$-Clique if and only if $(x,k')$ is a yes-instance of $L$,
\item $k'\le g(k)$ for some computable function $g :\mathbb{N}\to\mathbb{N}$.
\end{itemize}
Obviously, if a parameterized problem $L$ is $W[1]$-hard, then   no $f(k)\cdot |x|^{O(1)}$-time algorithm   can decide whether $(x,k)$ is a yes-instance of $L$ unless $W[1]=FPT$.  We say approximating $k$-Clique to a factor of $c$ is $W[1]$-hard if the existence of $f(k)\cdot |G|^{O(1)}$-time algorithm  that can distinguish $\omega(G)\ge k$ and $\omega(G)<k/c$ would imply $W[1]=FPT$.

\begin{definition}[$k$-Vector-Sum]
Given $k$ sets $V_1,\ldots,V_k$ of vectors and a target vector $\vec{t}$ in $\mathbb{F}^{m}$, the goal of $k$-vector-sum problem is to decide whether there exist $\vec{v}_1\in V_1,\ldots,\vec{v}_k\in V_k$ such that
\[
\sum_{i\in[k]}\vec{v}_i=\vec{t}.
\]
\end{definition}
The $W[1]$-hardness of $k$-Vector-Sum was proved in  \cite{abboud2013exact}. For the convenience of the reader, we include a proof in the Appendix.
\begin{theo}\label{thm:kvec}
$k$-Vector-Sum with $\mathbb{F}=\mathbb{F}_2$  and $m=\Theta(k^2\log n)$ is $W[1]$-hard parameterized by $k$.
\end{theo}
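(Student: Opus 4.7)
The plan is to give a straightforward FPT-reduction from the $W[1]$-hard \textsc{Multicolored $k_0$-Clique} problem, which asks, given a graph $G$ on $n$ vertices with its vertex set partitioned as $V_1 \cup \cdots \cup V_{k_0}$, whether there is a clique with exactly one vertex in each color class. I would produce a $k$-Vector-Sum instance over $\mathbb{F}_2$ with parameter $k = k_0 + \binom{k_0}{2} = \Theta(k_0^2)$ whose vectors lie in $\mathbb{F}_2^m$ with $m = k_0(k_0-1)\lceil \log n\rceil = \Theta(k_0^2 \log n) = \Theta(k \log n)$; padding each vector with extra zero coordinates then lets us reach any target $m = \Theta(k^2 \log n)$ required by the theorem.

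For the construction, partition the $m$ coordinates into $k_0(k_0-1)$ blocks of $\lceil \log n\rceil$ bits, indexed by the ordered pairs $(i,j)$ with $i\neq j$ in $[k_0]$, and identify every vertex with an element of $\mathbb{F}_2^{\lceil \log n\rceil}$ via some fixed injection. For each $i\in[k_0]$ I introduce a \emph{vertex-selection} set containing, for every $v \in V_i$, a vector $\vec{y}_{v,i}$ whose $(i,j)$-block equals $v$ for every $j\neq i$ and whose remaining blocks are $\vec{0}$. For each pair $i<j$ I introduce an \emph{edge-selection} set containing, for every edge $\{u,v\} \in E(G)$ with $u\in V_i$ and $v\in V_j$, a vector $\vec{z}_{uv,ij}$ whose $(i,j)$-block equals $u$, whose $(j,i)$-block equals $v$, and whose remaining blocks are $\vec{0}$. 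The target vector is $\vec{t}=\vec{0}$.

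Completeness is immediate: a multicolored clique $\{v_1,\ldots,v_{k_0}\}$ with $v_i\in V_i$ yields a solution by picking $\vec{y}_{v_i,i}$ from each vertex-selection set and $\vec{z}_{v_iv_j,ij}$ from each edge-selection set, since every $(i,j)$-block then sums to $v_i \oplus v_i = \vec{0}$. For soundness, any solution picks some $\vec{y}_{v_i,i}$ and, for each $i<j$, some $\vec{z}_{u_{ij}w_{ij},ij}$; the $(i,j)$-block equation then forces $v_i = u_{ij}$ (the $V_i$-endpoint of the chosen edge for the pair $\{i,j\}$), and the symmetric $(j,i)$-block equation forces $v_j = w_{ij}$. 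Hence each selected edge is exactly $\{v_i,v_j\}\in E(G)$, so $\{v_1,\ldots,v_{k_0}\}$ is a multicolored clique.

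The one substantive issue is the consistency step above. A naive encoding that stored ``the vertex in slot $i$'' in a single common block would fail, because $\mathbb{F}_2$-XOR lets distinct $V_i$-endpoints from different edge-selection sets cancel without actually being equal. Dedicating a separate block to each ordered pair $(i,j)$ rigidly links the vertex chosen in the vertex-selection set for color $i$ to the $V_i$-endpoint of the edge chosen in the edge-selection set for the pair $\{i,j\}$, ruling out any such cancellation trick. Everything else, namely counting parameters and checking that the reduction runs in polynomial time, is routine.
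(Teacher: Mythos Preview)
Your reduction is correct and is essentially the same as the paper's: both reduce from Multicolored $k_0$-Clique, use $k=k_0+\binom{k_0}{2}$ vector sets (one per color, one per color pair), and index $\lceil\log n\rceil$-sized blocks by ordered pairs $(i,j)$ so that the $(i,j)$-block pins the $V_i$-endpoint of the edge chosen for $\{i,j\}$ to the vertex chosen for color $i$. The only cosmetic difference is that the paper prepends $k_0+\binom{k_0}{2}$ ``selector'' bits (one per set) and takes $\vec t=\vec 1\circ\vec 0$, whereas you omit these bits and use $\vec t=\vec 0$; since the problem definition already requires choosing exactly one vector per set, those selector bits are redundant, and your padding remark handles the dimension $m=\Theta(k^2\log n)$ just as the paper implicitly does.
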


\section{Gap-reduction from $k$-Vector-Sum to $k$-Clique}

Given an instance $(V_1,V_2,\ldots,V_k,\vec{t})$ of $k$-Vector-Sum over $\mathbb{F}^{m}$. Let  $V=\bigcup_{i\in[k]}V_i$, $n=|V|$ and $h=\Theta(k^2)$. By Theorem~\ref{thm:kvec}, we can assume that $m=h\log n$. Let $\mathbb{F}$ be a finite field  with $|\mathbb{F}|=4$. Vectors in the hardness instances from Theorem~\ref{thm:kvec} can still be treated as vectors in $\mathbb{F}^m$. Since  $V$ only contains  vectors in $\{0,1\}^m$, we have 
\begin{equation}\label{eq:ucv}
\vec{v}\neq c\vec{u} \text{ for any  distinct $\vec{v},\vec{u}\in V$ and  nonzero $c\in\mathbb{F}$}.
\end{equation}

For any  $\ell\in\mathbb{N}$, select $\ell$ matrices $A_1,A_2,\ldots,A_\ell\in \mathbb{F}^{h\times m}$ randomly and independently. For every  $\vec{v}\in \mathbb{F}^m$, let  
\[
g(\vec{v})=(A_1\vec{v},\cdots,A_\ell\vec{v})\in\mathbb{F}^{h\ell}.
\]
For every   vector $\vec{\alpha}\in\mathbb{F}^h$ and $\vec{v}\in \mathbb{F}^m$, let  
\[
f(\vec{\alpha},\vec{v})=(\vec{\alpha}^TA_1\vec{v},\cdots,\vec{\alpha}^T A_\ell\vec{v})\in\mathbb{F}^\ell.
\]
Recall that for any $d\cdot n$ vectors $\vec{v}=(v_1,\ldots,v_{dn})\in\mathbb{F}^{dn}$ and $\vec{a}=(a_1,\ldots,a_d)\in\mathbb{F}^d$, 
\[
F(\vec{a},\vec{v})=(\sum_{i\in [d]}a_iv_i,\ldots,\sum_{i\in [d]}a_iv_{i+jd},\ldots,\sum_{i\in [d]}a_iv_{i+(n-1)d})\in\mathbb{F}^n.
\]
It follows that $f(\vec{\alpha},\vec{v})=F(\vec{\alpha},g(\vec{v}))$. Note that for every $\vec{v}\in\mathbb{F}^{dn}$, $F(\cdot,\vec{v}) : \mathbb{F}^{d}\to\mathbb{F}^n$ is a homomorphism from $\mathbb{F}^{d}$ to $\mathbb{F}^{n}$. Every homomorphism $f$ from $\mathbb{F}^{d}$ to $\mathbb{F}^{n}$ is also a bitwise linear function, so it can be written as $f(\cdot)=F(\cdot,\vec{v})$ for some $\vec{v}\in\mathbb{F}^{dn}$.

\begin{lemma}\label{lem:difft}
If $1/10 > (1/|\mathbb{F}|)^{\ell h}\cdot 2^m$, then with  probability at least $9/10$, for all nonzero vector $\vec{v}\in\mathbb{F}^m$, $g(\vec{v}))\neq \vec{0}_{\ell h}$.
\end{lemma}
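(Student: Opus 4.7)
The plan is a direct union bound: first compute the probability that a fixed nonzero $\vec{v}$ lies in the kernel of $g$, then sum over the relevant set. For the per-vector calculation, fix a nonzero $\vec{v} \in \mathbb{F}^m$ and write $\vec{r}_{i,j}$ for the $j$-th row of $A_i$. The $\ell h$ coordinates of $g(\vec{v})$ are exactly the inner products $\vec{r}_{i,j} \cdot \vec{v}$. Since the $A_i$'s are independent and each has uniform i.i.d.\ entries, the rows $\vec{r}_{i,j}$ are mutually independent and uniform on $\mathbb{F}^m$. For any nonzero $\vec{v}$ the linear functional $\vec{r} \mapsto \vec{r} \cdot \vec{v}$ from $\mathbb{F}^m$ to $\mathbb{F}$ is surjective, so each $\vec{r}_{i,j} \cdot \vec{v}$ is uniform on $\mathbb{F}$. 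Hence $g(\vec{v})$ is uniformly distributed on $\mathbb{F}^{\ell h}$ and $\Pr[g(\vec{v}) = \vec{0}_{\ell h}] = |\mathbb{F}|^{-\ell h}$.

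For the second step I would apply a union bound. The factor $2^m$ in the hypothesis is the size of $\{0,1\}^m$, which is the natural set to consider for the intended application: since $V \subseteq \{0,1\}^m$ by the discussion before the lemma, and $\mathbb{F}$ has characteristic $2$, the nonzero differences relevant for injectivity on $V$ all lie in $\{0,1\}^m$. Summing the per-vector bound over the $2^m - 1$ nonzero $\vec{v} \in \{0,1\}^m$ gives total failure probability at most $2^m \cdot |\mathbb{F}|^{-\ell h} < 1/10$ by hypothesis, which is the desired conclusion.

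The argument is essentially routine; there is no genuine obstacle. The key input is the mutual independence of the $\ell h$ rows across the $A_i$'s, which reduces the per-vector computation to a one-line product of uniform probabilities, and then a single union bound closes the argument. The only mild subtlety is identifying the correct index set for the union bound: the $2^m$ in the hypothesis matches $\{0,1\}^m$ exactly, whereas quantifying over all of $\mathbb{F}^m$ would have required a factor of $|\mathbb{F}|^m = 4^m$ instead.
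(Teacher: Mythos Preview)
Your proposal is correct and follows essentially the same approach as the paper: compute the per-vector kernel probability $(1/|\mathbb{F}|)^{\ell h}$ and take a union bound. You also correctly flag the minor slip in the statement---the factor $2^m$ in the hypothesis matches a union bound over $\{0,1\}^m$ (which is all the application needs), not over all of $\mathbb{F}^m$; the paper's own proof uses the same $2^m$ count without comment.
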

\begin{proof}
For any nonzero vector $\vec{v}\in\mathbb{F}^m$, $\Pr[A_i\vec{v}=\vec{0}_h]=(1/|\mathbb{F}|)^h$.  
\[
\Pr[g(\vec{v})=\vec{0}_{\ell h}]=\prod_{i\in[\ell]}\Pr[A_i\vec{v}=\vec{0}_h]= (1/|\mathbb{F}|)^{\ell h}.
\]
With probability at least
\[
1-(1/|\mathbb{F}|)^{\ell h}\cdot 2^m\ge 9/10,
\]
$g(\vec{v})\neq\vec{0}_{\ell h}$ for all nonzero $\vec{v}\in\mathbb{F}^m$.
\end{proof}

\begin{lemma}\label{lem:randmatrix}
If $A\in\mathbb{F}^{h\times m}$ is a random  matrix, then
 for any  nonzero vectors $\vec{b},\vec{c}\in\mathbb{F}^h$ and distinct $\vec{v},\vec{u}\in\mathbb{F}^m$ with $\vec{v}\neq a\vec{u}$ for any $a\in\mathbb{F}\setminus\{0\}$, 
\begin{equation}
\Pr[\vec{b}^TA\vec{v}=\vec{c}^TA\vec{u}]= 1/|\mathbb{F}|.
\end{equation}
\end{lemma}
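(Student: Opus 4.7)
The plan is to view $\vec{b}^T A \vec{v} - \vec{c}^T A \vec{u}$ as a single $\mathbb{F}$-linear form in the $hm$ i.i.d.\ uniform entries $A_{ij}$ of $A$. Expanding the two bilinear forms gives
\[
\vec{b}^T A \vec{v} - \vec{c}^T A \vec{u} \;=\; \sum_{i\in[h],\, j\in[m]} (b_i v_j - c_i u_j)\, A_{ij},
\]
so the standard fact that a nontrivial $\mathbb{F}$-linear combination of independent uniform $\mathbb{F}$-valued random variables is itself uniform on $\mathbb{F}$ will immediately yield the desired probability $1/|\mathbb{F}|$, provided that at least one coefficient $b_i v_j - c_i u_j$ is nonzero.

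The core task is therefore to show that the coefficient ``matrix'' $\vec{b}\vec{v}^T - \vec{c}\vec{u}^T$ is not identically zero under the lemma's hypotheses. I will argue by contradiction, assuming $b_i v_j = c_i u_j$ for all $i \in [h]$ and $j \in [m]$, and splitting into two cases on whether $\vec{v} = \vec{0}_m$. If $\vec{v} = \vec{0}_m$, then $c_i u_j = 0$ for all $i,j$; choosing $i_0$ with $c_{i_0} \neq 0$ (possible since $\vec{c} \neq \vec{0}_h$) forces $\vec{u} = \vec{0}_m = \vec{v}$, contradicting distinctness. If $\vec{v} \neq \vec{0}_m$, I fix $i_0$ with $b_{i_0} \neq 0$ and read off $v_j = (c_{i_0}/b_{i_0}) u_j$ for every $j$, i.e.\ $\vec{v} = a\vec{u}$ with $a = c_{i_0}/b_{i_0}$. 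Since $\vec{v} \neq \vec{0}_m$, the scalar $a$ must be nonzero, contradicting the hypothesis that $\vec{v} \neq a\vec{u}$ for any nonzero $a \in \mathbb{F}$.

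Once the all-zero coefficient matrix has been ruled out, pick $(i^*,j^*)$ with $b_{i^*} v_{j^*} - c_{i^*} u_{j^*} \neq 0$. Conditioning on all entries of $A$ except $A_{i^*j^*}$, the event $\vec{b}^T A \vec{v} = \vec{c}^T A \vec{u}$ reduces to an equation $\alpha \cdot A_{i^*j^*} = \beta$ with a fixed nonzero $\alpha \in \mathbb{F}$; because $A_{i^*j^*}$ is uniform on $\mathbb{F}$ independently of everything else, this conditional probability is exactly $1/|\mathbb{F}|$, and averaging over the conditioning yields the claim.

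The main obstacle is the case analysis in showing $\vec{b}\vec{v}^T \neq \vec{c}\vec{u}^T$: this is the only place where the somewhat unusual hypothesis ``$\vec{v} \neq a\vec{u}$ for any \emph{nonzero} $a$'' is used, as opposed to the stronger ``$\vec{v}, \vec{u}$ are linearly independent''. I will make sure the edge case $\vec{v} = \vec{0}_m$, which is not excluded by this weaker hypothesis, is handled separately using the nonzeroness of $\vec{c}$ and the distinctness of $\vec{v}$ and $\vec{u}$.
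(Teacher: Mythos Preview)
Your proposal is correct and follows essentially the same approach as the paper: both rewrite $\vec{b}^T A \vec{v} - \vec{c}^T A \vec{u}$ as the dot product of the vectorized $A$ with the rank-one difference $\vec{b}\vec{v}^T - \vec{c}\vec{u}^T$, argue this difference is nonzero, and conclude the probability is $1/|\mathbb{F}|$. Your case analysis for nonvanishing of $\vec{b}\vec{v}^T - \vec{c}\vec{u}^T$ is in fact more careful than the paper's one-line assertion, explicitly handling the edge case $\vec{v}=\vec{0}_m$ (which the hypothesis does not exclude) via the nonzeroness of $\vec{c}$ and the distinctness of $\vec{v},\vec{u}$.
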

\begin{proof}
Let $B$ be an $h\times m$ matrix with $B_{ij}=\vec{b}[i]\cdot\vec{v}[j]$. Let $C$ be an $h\times m$ matrix with $C_{ij}=\vec{c}[i]\cdot\vec{u}[j]$. We can treat $A,B$ as a vector of length $hm$ and use $A\cdot B$ denote their dot product. It follows that $\vec{b}^TA\vec{v}=B\cdot A$ and $\vec{c}^TA\vec{u}=C\cdot A$. 
Since $\vec{v}\neq a\vec{u}$ for any  nonzero $a\in\mathbb{F}$, we have $B-C$ is not a zero matrix. Therefore,
\[
\Pr[\vec{b}^TA\vec{v}=\vec{c}^TA\vec{u}]=\Pr[B\cdot A=C\cdot A]=\Pr[(B-C)\cdot A=0]=1/|\mathbb{F}|.
\]
\end{proof}

\begin{lemma}\label{lem:alphav}
If $|V|^2\cdot |\mathbb{F}|^h\cdot (1/|\mathbb{F}|)^\ell<1/10$, then with  probability at least $9/10$, $f(\vec{\alpha},\vec{v})\neq f(\vec{\alpha},\vec{u})$ for any distinct $\vec{v},\vec{u}\in V$ and nonzero $\vec{\alpha}\in\mathbb{F}^h$.
\end{lemma}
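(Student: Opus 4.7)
The plan is a straightforward union bound over all triples $(\vec{\alpha}, \vec{v}, \vec{u})$ with $\vec{\alpha} \in \mathbb{F}^h \setminus \{\vec{0}\}$ and distinct $\vec{v}, \vec{u} \in V$. There are fewer than $|\mathbb{F}|^h$ choices for $\vec{\alpha}$ and fewer than $|V|^2$ ordered distinct pairs, matching the counting factors appearing in the hypothesis. So it suffices to show that for every fixed such triple the probability of the bad event $f(\vec{\alpha}, \vec{v}) = f(\vec{\alpha}, \vec{u})$ is at most $(1/|\mathbb{F}|)^\ell$.

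Fix $\vec{\alpha}$, $\vec{v}$, $\vec{u}$ as above. Since the matrices $A_1, \ldots, A_\ell$ are chosen independently, the $\ell$ coordinate events $\vec{\alpha}^T A_i \vec{v} = \vec{\alpha}^T A_i \vec{u}$ (for $i \in [\ell]$) are mutually independent. For a single $i$, I will invoke Lemma~\ref{lem:randmatrix} with $\vec{b} = \vec{c} = \vec{\alpha}$, using the hypothesis that $\vec{\alpha}$ is nonzero and the key structural fact \eqref{eq:ucv} that $\vec{v} \neq a\vec{u}$ for any nonzero $a \in \mathbb{F}$ (which holds since $V \subseteq \{0,1\}^m$ and $\vec{v}, \vec{u}$ are distinct). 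The lemma then yields $\Pr[\vec{\alpha}^T A_i \vec{v} = \vec{\alpha}^T A_i \vec{u}] = 1/|\mathbb{F}|$, and by independence across $i$ we obtain
\[
\Pr[f(\vec{\alpha}, \vec{v}) = f(\vec{\alpha}, \vec{u})] = (1/|\mathbb{F}|)^\ell.
\]

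Combining the union bound with the hypothesis $|V|^2 \cdot |\mathbb{F}|^h \cdot (1/|\mathbb{F}|)^\ell < 1/10$ then bounds the probability that some bad triple exists by less than $1/10$, proving the lemma. The only subtle point is making sure the preconditions of Lemma~\ref{lem:randmatrix} really do apply, which is why I need to explicitly cite \eqref{eq:ucv} to rule out $\vec{v}$ being a nonzero scalar multiple of $\vec{u}$; everything else is mechanical.
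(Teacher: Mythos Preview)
Your proposal is correct and follows essentially the same argument as the paper: invoke Lemma~\ref{lem:randmatrix} with $\vec{b}=\vec{c}=\vec{\alpha}$ (using \eqref{eq:ucv} to meet its hypotheses), use independence across the $\ell$ matrices to get $(1/|\mathbb{F}|)^\ell$, and finish with a union bound over the $|V|^2\cdot|\mathbb{F}|^h$ triples.
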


\begin{proof}
By (\ref{eq:ucv}), $\vec{v}\neq c\vec{u}$ for any nonzero $c\in\mathbb{F}$. Apply Lemma~\ref{lem:randmatrix} with $\vec{b}=\vec{c}=\vec{\alpha}$, we get
\[
\Pr[f(\vec{\alpha},\vec{v})=f(\vec{\alpha},\vec{u})]=\prod_{i\in[\ell]}\Pr[\vec{\alpha}^TA_i\vec{v}=\vec{\alpha}^TA_i\vec{u}]= (1/|\mathbb{F}|)^\ell.
\]
There are at most $|V|^2$ pairs of $(\vec{v},\vec{u})$ and at most $|\mathbb{F}|^d$ choices of $\vec{\alpha}$. Since $|V|^2\cdot |\mathbb{F}|^d\cdot (1/|\mathbb{F}|)^\ell<1/10$,  with probability at least $9/10$, $f(\vec{\alpha},\vec{v})\neq f(\vec{\alpha},\vec{u})$ for all nonzero $\vec{\alpha}\in\mathbb{F}^h$ and distinct $\vec{v},\vec{u}\in V$.
\end{proof}

\begin{lemma}\label{lem:selfcor}
If $|V|^3\cdot |\mathbb{F}|^{2h}\cdot (1/|\mathbb{F}|)^\ell<1/10$, then with  probability at least $9/10$, $f(\vec{\alpha},\vec{v})+f(\vec{\alpha}',\vec{u})\neq f(\vec{\alpha}+\vec{\alpha}',\vec{w})$ for any distinct $\vec{v},\vec{u},\vec{w}\in V$ and nonzero $\vec{\alpha},\vec{\alpha}'$.
\end{lemma}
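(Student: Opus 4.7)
The plan is to mirror the structure of the proof of Lemma~\ref{lem:alphav}: for each fixed choice of the tuple $(\vec{v},\vec{u},\vec{w},\vec{\alpha},\vec{\alpha}')$, bound the probability that the bad equality holds by applying Lemma~\ref{lem:randmatrix} independently to each matrix $A_i$, then take a union bound over the at most $|V|^3\cdot |\mathbb{F}|^{2h}$ such tuples.

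First, I would rewrite the bad event coordinatewise. Using $f(\vec{\beta},\vec{z})=(\vec{\beta}^TA_1\vec{z},\ldots,\vec{\beta}^TA_\ell\vec{z})$ and the bilinearity of $\vec{\beta}^TA_i\vec{z}$ in $\vec{\beta}$, the equation $f(\vec{\alpha},\vec{v})+f(\vec{\alpha}',\vec{u})=f(\vec{\alpha}+\vec{\alpha}',\vec{w})$ is equivalent to asking that for every $i\in[\ell]$,
\[
\vec{\alpha}^T A_i(\vec{v}-\vec{w})=\vec{\alpha}'^T A_i(\vec{w}-\vec{u}).
\]
To invoke Lemma~\ref{lem:randmatrix} on each $i$ with $\vec{b}=\vec{\alpha}$, $\vec{c}=\vec{\alpha}'$ (both nonzero by assumption) and the two "input" vectors $\vec{p}:=\vec{v}-\vec{w}$, $\vec{q}:=\vec{w}-\vec{u}$, I must verify that $\vec{p}\neq a\vec{q}$ for every nonzero $a\in\mathbb{F}$.

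This scalar-multiple check is the only subtle step, and it is where the hypothesis $V\subseteq\{0,1\}^m$ (via (\ref{eq:ucv})) really enters. Because $\mathbb{F}$ has characteristic $2$, the vectors $\vec{p}$ and $\vec{q}$ are coordinatewise XORs of $0/1$ vectors and therefore again lie in $\{0,1\}^m$. If $\vec{p}=a\vec{q}$ for some $a\in\mathbb{F}\setminus\{0,1\}$, then on any coordinate $j$ with $\vec{q}[j]=1$ we would have $\vec{p}[j]=a\notin\{0,1\}$, forcing $\vec{q}=\vec{0}$; but $\vec{q}=\vec{w}-\vec{u}\neq\vec{0}$ since $\vec{u}\neq\vec{w}$. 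The remaining case $a=1$ gives $\vec{v}=\vec{u}$, again contradicting distinctness. Hence Lemma~\ref{lem:randmatrix} applies and yields, for each $i$,
\[
\Pr\bigl[\vec{\alpha}^T A_i\vec{p}=\vec{\alpha}'^T A_i\vec{q}\bigr]=1/|\mathbb{F}|.
\]

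By the mutual independence of $A_1,\ldots,A_\ell$, all $\ell$ equalities hold simultaneously with probability at most $(1/|\mathbb{F}|)^\ell$. Taking a union bound over the at most $|V|^3$ ordered triples of distinct $(\vec{v},\vec{u},\vec{w})$ and the at most $|\mathbb{F}|^{2h}$ pairs $(\vec{\alpha},\vec{\alpha}')$ gives total failure probability at most $|V|^3\cdot |\mathbb{F}|^{2h}\cdot (1/|\mathbb{F}|)^\ell<1/10$, as required. The main obstacle is precisely the scalar-multiple verification above; once that is in hand the argument is a direct adaptation of Lemma~\ref{lem:alphav}.
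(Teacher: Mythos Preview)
Your proof is correct and follows essentially the same route as the paper: both rewrite the bad event as $f(\vec{\alpha},\vec{v}-\vec{w})=f(\vec{\alpha}',\vec{w}-\vec{u})$, use the fact that $\vec{v}-\vec{w}$ and $\vec{w}-\vec{u}$ lie in $\{0,1\}^m$ to rule out nontrivial scalar multiples, apply Lemma~\ref{lem:randmatrix} coordinatewise, and finish with independence plus a union bound. Your justification of the scalar-multiple step is in fact a bit more explicit than the paper's.
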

\begin{proof} Observe that $\Pr[f(\vec{\alpha},\vec{v})+f(\vec{\alpha}',\vec{u})= f(\vec{\alpha}+\vec{\alpha'},\vec{w})]= \Pr[f(\vec{\alpha},\vec{v}-\vec{w})= f(\vec{\alpha'},\vec{w}-\vec{u})]$ and $\vec{w}-\vec{u}=\vec{w}+\vec{u}\neq\vec{v}+\vec{w}=\vec{v}-\vec{w}$. 
Since $\vec{w}-\vec{u}$ and $\vec{v}-\vec{w}$ are vectors in $\{0,1\}^m\subseteq \mathbb{F}^m$, $\vec{w}-\vec{u}\neq (\vec{v}-\vec{w})$ implies $\vec{w}-\vec{u}\neq a(\vec{v}-\vec{w})$ for any $a\in\mathbb{F}\setminus\{0\}$. By Lemma~\ref{lem:randmatrix}, $\Pr[f(\vec{\alpha},\vec{v}-\vec{w})= f(\vec{\alpha'},\vec{w}-\vec{u})]\le (1/|\mathbb{F}|)^\ell$.
There are at most $|V|^3$ pairs of $(\vec{v},\vec{u},\vec{w})$ and at most $|\mathbb{F}|^{2d}$ choices of $\vec{\alpha},\vec{\alpha}'$. Since $|V|^3\cdot |\mathbb{F}|^{2d}\cdot (1/|\mathbb{F}|)^\ell<1/10$,  with probability at least $9/10$,  $f(\vec{\alpha},\vec{v})+f(\vec{\alpha}',\vec{u})\neq f(\vec{\alpha}+\vec{\alpha'},\vec{w})$ for any distinct $\vec{v},\vec{u},\vec{w}\in V$ and nonzero $\vec{\alpha},\vec{\alpha}'$.
\end{proof}

\noindent\textit{Construction of the CSP.} Let $\ell=2\log n+2h$. Then for  large $n$,
\[
(1/|\mathbb{F}|)^{\ell h}\cdot 2^{h\log n}=4^{-2h\log n-2h}\cdot 2^{h\log n}\le 2^{-3h\log n}<1/10,
\]
and
\[
|V|^3\cdot |\mathbb{F}|^{2h}\cdot (1/|\mathbb{F}|)^\ell=n^3\cdot 4^{2h}\cdot 4^{-2log n-2h}\le 1/n\le 1/10.
\]
By Lemma~\ref{lem:alphav}, Lemma~\ref{lem:selfcor} and Lemma~\ref{lem:difft}, with probability at least $7/10$, $g(\vec{v})\neq\vec{0}_{\ell h}$ for all $\vec{v}\in \mathbb{F}^{h\log n}$,  $f(\vec{\alpha},\vec{v})\neq f(\vec{\alpha},\vec{u})$ and $f(\vec{\alpha},\vec{v})+f(\vec{\alpha}',\vec{u})\neq f(\vec{\alpha}+\vec{\alpha'},\vec{w})$ for  all   distinct $\vec{v},\vec{u},\vec{w}\in V$ and nonzero $\vec{\alpha},\vec{\alpha}'$.
  
Construct a CSP instance $I$ with $|\mathbb{F}|^{kh}$ variables $\{x_{\vec{\alpha}_1,\ldots,\vec{\alpha}_k} : \vec{\alpha}_1\ldots\vec{\alpha}_k\in\mathbb{F}^h\}$. The alphabet of this CSP  is $\Gamma=\mathbb{F}^\ell$. If the instance is a yes-instance, then each $x_{\vec{\alpha}_1,\ldots,\vec{\alpha}_k}$ is expected to take the value $f(\vec{\alpha}_1,\vec{v}_1)+\cdots+f(\vec{\alpha}_k,\vec{v}_k)$ for some a solution $\vec{v}_1,\ldots,\vec{v}_k$ to the $k$-Vector-Sum problem. We now describe three types of constraints.
\begin{description}
\item[(C1)]  For all $\vec{\alpha}_1,\ldots,\vec{\alpha}_k$ and $\vec{\beta}_1,\ldots,\vec{\beta}_k$, check if $x_{\vec{\alpha}_1+\vec{\beta}_1,\ldots,\vec{\alpha}_k+\vec{\beta}_k}=x_{\vec{\alpha}_1,\ldots,\vec{\alpha}_k}+x_{\vec{\beta}_1,\ldots,\vec{\beta}_k}$.
In other words, the relation for variable tuple $(x_{\vec{\alpha}_1+\vec{\beta}_1,\ldots,\vec{\alpha}_k+\vec{\beta}_k},x_{\vec{\alpha}_1,\ldots,\vec{\alpha}_k},x_{\vec{\beta}_1,\ldots,\vec{\beta}_k})$ is
\[
R_{x_{\vec{\alpha}_1+\vec{\beta}_1,\ldots,\vec{\alpha}_k+\vec{\beta}_k},x_{\vec{\alpha}_1,\ldots,\vec{\alpha}_k},x_{\vec{\beta}_1,\ldots,\vec{\beta}_k}}=\{(\vec{a},\vec{b},\vec{c})\in\Gamma^3 : \vec{a}=\vec{b}+\vec{c}\}.
\]
\item[(C2)]  For every $i\in [k]$ and  $\vec{\alpha}_1,\ldots,\vec{\alpha}_k,\vec{\alpha}\in\mathbb{F}^h$,  check if $x_{\vec{\alpha}_1,\ldots,\vec{\alpha}_i+\vec{\alpha},\ldots,\vec{\alpha}_k}-x_{\vec{\alpha}_1,\ldots,\vec{\alpha}_k}=f(\vec{\alpha},\vec{v})$ for some $\vec{v}\in V_i$. In other words, the constraint between  $x_{\vec{\alpha}_1,\ldots,\vec{\alpha}_i+\vec{\alpha},\ldots,\vec{\alpha}_k}$ and $x_{\vec{\alpha}_1,\ldots,\vec{\alpha}_k}$ is  
\[
R_{x_{\vec{\alpha}_1,\ldots,\vec{\alpha}_i+\vec{\alpha},\ldots,\vec{\alpha}_k},x_{\vec{\alpha}_1,\ldots,\vec{\alpha}_k}}=\{(\vec{a},\vec{b})\in\Gamma^2 : \text{$\vec{a}-\vec{b}=f(\vec{\alpha},\vec{v})$ for some $\vec{v}\in V_i$}\}.
\]
\item[(C3)]  For all $\vec{\alpha}_1,\ldots,\vec{\alpha}_k,\vec{\alpha}\in\mathbb{F}^h$,  check if $x_{\vec{\alpha}_1+\vec{\alpha},\ldots,\vec{\alpha}_k+\vec{\alpha}}-x_{\vec{\alpha}_1,\ldots,\vec{\alpha}_k}=f(
\vec{\alpha},\vec{t})$. That is, 
\[
R_{x_{\vec{\alpha}_1+\vec{\alpha},\ldots,\vec{\alpha}_k+\vec{\alpha}}, x_{\vec{\alpha}_1,\ldots,\vec{\alpha}_k}}=\{(\vec{a},\vec{b})\in\Gamma^2 : \vec{a}-\vec{b}=f(\vec{\alpha},\vec{t})\}.
\]
\end{description}
Constraints of the form $x_{\vec{\alpha}_1,\ldots,\vec{\alpha}_i+\vec{\alpha},\ldots,\vec{\alpha}_k}-x_{\vec{\alpha}_1,\ldots,\vec{\alpha}_k}=f(\vec{\alpha},\vec{v})$ in (C2) are also called  (C2)-$i$-type or (C2)-$i$-$\vec{\alpha}$-type constraints.
Similarly, constraints of the form $x_{\vec{\alpha}_1+\vec{\alpha},\ldots,\vec{\alpha}_k+\vec{\alpha}}-x_{\vec{\alpha}_1,\ldots,\vec{\alpha}_k}=f(\vec{\alpha},\vec{t})$ are called  (C3)-$\vec{\alpha}$-type constraints.


\begin{lemma}\label{lem:kvec2csp}
If the $k$-Vector-Sum instance has a solution, then so does $I$. If the $k$-Vector-Sum instance has no solution, then there exists a constant $\epsilon>0$ such that for every assignment to the variables of $I$ one of the followings must hold:
\begin{itemize}
\item  $\epsilon/2$-fraction of constraints of (C1) are not satisfied,
\item there exists $i\in[k]$ such that  $\epsilon^2$-fraction of (C2)-$i$-type constraints  are  not satisfied. 
\item $\epsilon$-fraction of constraints of (C3) are not satisfied.
\end{itemize}
\end{lemma}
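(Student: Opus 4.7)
The plan is to handle the two implications separately. For the completeness direction, if $(\vec{v}_1,\ldots,\vec{v}_k)$ is a solution to the $k$-Vector-Sum instance I would set
\[
\sigma(\vec{\alpha}_1,\ldots,\vec{\alpha}_k) := \sum_{i\in[k]} f(\vec{\alpha}_i,\vec{v}_i),
\]
and verify (C1), (C2), (C3) in turn; each check reduces to the additivity of $f(\cdot,\vec{v})$ in its first coordinate, with the relation $\sum_i\vec{v}_i=\vec{t}$ invoked only for the (C3) family.

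For the soundness direction I prove the contrapositive. Fix a small constant $\epsilon>0$ and suppose an assignment $\sigma$ violates fewer than an $\epsilon/2$-fraction of (C1), fewer than an $\epsilon^2$-fraction of (C2)-$i$-type for every $i\in[k]$, and fewer than an $\epsilon$-fraction of (C3); the goal is to extract $(\vec{u}_1,\ldots,\vec{u}_k)$ with $\vec{u}_i\in V_i$ and $\sum_i\vec{u}_i=\vec{t}$. Since (C1) is literally a linearity test on $\sigma\colon\mathbb{F}^{kh}\to\mathbb{F}^\ell$, the Linearity Test Theorem yields a homomorphism $\tilde{x}$ at distance at most $\epsilon$ from $\sigma$. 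Because $\mathbb{F}^{kh}$ is a direct product of abelian groups, $\tilde{x}$ splits as $\tilde{x}(\vec{\alpha}_1,\ldots,\vec{\alpha}_k)=\sum_i \tilde{x}_i(\vec{\alpha}_i)$ with each $\tilde{x}_i\colon\mathbb{F}^h\to\mathbb{F}^\ell$ a homomorphism, so the task reduces to identifying every $\tilde{x}_i$ with $f(\cdot,\vec{u}_i)$ for a single $\vec{u}_i\in V_i$.

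The main obstacle is this last identification, which is precisely the issue (P1) flagged in the overview. Markov averaging on the (C2)-$i$-type constraints isolates a set of \emph{good} offsets $\vec{\alpha}\in\mathbb{F}^h$, of density at least $1-\epsilon$, for which fewer than an $\epsilon$-fraction of the (C2)-$i$-$\vec{\alpha}$-type constraints are violated. For a good nonzero $\vec{\alpha}$, a union bound with the $\epsilon$-closeness of $\sigma$ to $\tilde{x}$ produces some tuple where both endpoints agree with $\tilde{x}$ and the constraint holds; this forces $\tilde{x}_i(\vec{\alpha})=f(\vec{\alpha},\vec{v})$ for some $\vec{v}\in V_i$, and Lemma~\ref{lem:alphav} makes that $\vec{v}$ unique, giving a map $\vec{v}^{(i)}$ on good nonzero offsets. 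Substituting $\vec{v}^{(i)}$ into $\tilde{x}_i(\vec{\alpha}+\vec{\beta})=\tilde{x}_i(\vec{\alpha})+\tilde{x}_i(\vec{\beta})$ gives the identity $f(\vec{\alpha}+\vec{\beta},\vec{v}^{(i)}(\vec{\alpha}+\vec{\beta}))=f(\vec{\alpha},\vec{v}^{(i)}(\vec{\alpha}))+f(\vec{\beta},\vec{v}^{(i)}(\vec{\beta}))$ whenever $\vec{\alpha},\vec{\beta},\vec{\alpha}+\vec{\beta}$ are all good nonzero. Lemma~\ref{lem:selfcor} excludes the case where the three $V_i$-values are pairwise distinct, while Lemma~\ref{lem:alphav} combined with the additivity of $f(\cdot,\vec{u})$ excludes the case where exactly two coincide, so all three values must agree. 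Finally, for any two good nonzero offsets $\vec{\alpha}_0,\vec{\alpha}_1$, choosing a uniform shift $\vec{\gamma}$ such that $\vec{\gamma}$, $\vec{\alpha}_0+\vec{\gamma}$, $\vec{\alpha}_1+\vec{\gamma}$ are simultaneously good nonzero (which holds with positive probability for small $\epsilon$) and applying the triple-coincidence to $(\vec{\alpha}_0,\vec{\gamma},\vec{\alpha}_0+\vec{\gamma})$ and $(\vec{\alpha}_1,\vec{\gamma},\vec{\alpha}_1+\vec{\gamma})$ yields $\vec{v}^{(i)}(\vec{\alpha}_0)=\vec{v}^{(i)}(\vec{\gamma})=\vec{v}^{(i)}(\vec{\alpha}_1)$. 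Hence $\vec{v}^{(i)}$ is a single constant $\vec{u}_i\in V_i$ on all good nonzero offsets, and the homomorphism property extends $\tilde{x}_i(\cdot)=f(\cdot,\vec{u}_i)$ to all of $\mathbb{F}^h$.

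Once $\tilde{x}(\vec{\alpha}_1,\ldots,\vec{\alpha}_k)=\sum_i f(\vec{\alpha}_i,\vec{u}_i)$ is established, the same averaging-and-closeness argument applied to (C3) forces $f(\vec{\alpha},\sum_i\vec{u}_i-\vec{t})=\vec{0}$ for most $\vec{\alpha}\in\mathbb{F}^h$. If $\sum_i\vec{u}_i\ne\vec{t}$, Lemma~\ref{lem:difft} supplies some $A_j$ with $A_j(\sum_i\vec{u}_i-\vec{t})\ne\vec{0}$, so $f(\vec{\alpha},\sum_i\vec{u}_i-\vec{t})\ne\vec{0}$ for at least a $(1-1/|\mathbb{F}|)$-fraction of $\vec{\alpha}$, contradicting the previous conclusion for any $\epsilon<3/4$. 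Therefore $\sum_i\vec{u}_i=\vec{t}$ and $(\vec{u}_1,\ldots,\vec{u}_k)$ is a $k$-Vector-Sum solution, contradicting the no-solution assumption and completing the contrapositive.
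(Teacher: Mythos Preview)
Your proposal is correct and follows essentially the same route as the paper: apply the linearity test to (C1) to get a homomorphism close to $\sigma$, use the ``good offset'' argument together with Lemmas~\ref{lem:alphav} and~\ref{lem:selfcor} to force each coordinate homomorphism to equal $f(\cdot,\vec{u}_i)$ for a single $\vec{u}_i\in V_i$, and then invoke Lemma~\ref{lem:difft} on (C3) to conclude $\sum_i\vec{u}_i=\vec{t}$. The only organizational difference is that you argue the contrapositive (building a solution from an assignment that beats all three thresholds), whereas the paper case-splits on whether every $\vec{c}_i$ lies in $g(V_i)$ and derives the corresponding (C2) or (C3) failure directly; the underlying computations and uses of the three lemmas are the same.
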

\begin{proof}
If the $k$-Vector-Sum instance has a solution $\vec{v}_1\in V_1,\ldots,\vec{v}_k\in V_k$, then let $x_{\vec{\alpha}_1,\ldots,\vec{\alpha}_k}=f(\vec{\alpha}_1,\vec{v}_1)+\cdots+f(\vec{\alpha}_k,\vec{v}_k)$. It is easy to check that  all the constraints are satisfied.

Now suppose that the $k$-Vector-Sum instance has no solution. Fix any assignment $\vec{x}\in\mathbb{F}^\ell$.
If $\epsilon/2$-fraction of constraints in (C1) are not satisfied, then we are done. Otherwise $(1-\epsilon/2)$-fraction of (C1) constraints are satisfied.
By the linearity test~\cite{blum1993self} and (C1),  there exist  $\vec{c}_1,\ldots,\vec{c}_k\in\mathbb{F}^{h\ell}$ such that for $(1-\epsilon)|\mathbb{F}^{kh}|$ many choices of $(\vec{\alpha}_1,\ldots,\vec{\alpha}_k)\in\mathbb{F}^{kh}$,  $\vec{x}_{\vec{\alpha}_1,\ldots,\vec{\alpha}_k}=F(\vec{\alpha}_1,\vec{c}_1)+F(\vec{\alpha}_2,\vec{c}_2)+\cdots+F(\vec{\alpha}_k,\vec{c}_k)$.
Let 
\[
A=\{(\vec{\alpha}_1,\ldots,\vec{\alpha}_k)\in\mathbb{F}^{kh} : \vec{x}_{\vec{\alpha}_1,\ldots,\vec{\alpha}_k}=F(\vec{\alpha}_1,\vec{c}_1)+F(\vec{\alpha}_2,\vec{c}_2)+\cdots+F(\vec{\alpha}_k,\vec{c}_k)\}.
\]
We have that $|A|\ge (1-\epsilon)|\mathbb{F}|^{kh}$.

Obviously, there are two cases:
\begin{itemize}
\item Either for every $i\in[k]$, there exists $\vec{v}_i\in V_i$ such that $\vec{c}_i=g(\vec{v}_i)$,
\item or there exists $i\in[k]$ such that $\vec{c}_i\neq g(\vec{v})$ for all $\vec{v}\in V_i$.
\end{itemize}

In the later case, we will show that  at least an $\epsilon^2$-fraction of   (C2)-$i$-type  constraints are not satisfied. 
Call  a vector $\vec{\alpha}\in\mathbb{F}^h$ good  if at least $(1-\epsilon)$-fraction of (C2)-$i$-$\vec{\alpha}$-type constraints are  satisfied. 
For every nonzero vector $\vec{\alpha}\in\mathbb{F}^h$, $\mathbb{F}^{kh}$ can be partitioned into two disjoint sets $X_{\vec{\alpha}}^-$ and $X_{\alpha}^+$ such that $X_{\vec{\alpha}}^+=\{(\vec{a}_1,\ldots,\vec{a}_i+\vec{\alpha},\ldots,\vec{a}_k) : (\vec{a}_1,\ldots,\vec{a}_k)\in X_{\vec{\alpha}}^-\}$. 
Now suppose $\vec{\alpha}$ is a good vector. There are $|\mathbb{F}^{kh}|/2$ constraints of (C2)-$i$-$\vec{\alpha}$-type. We construct a bipartite graph on $X_{\vec{\alpha}}^-$ and $X_{\vec{\alpha}}^+$. Two vertices $(\vec{a}_1,\ldots,\vec{a}_k)\in X_{\vec{\alpha}}^-$ and $(\vec{a}_1,\ldots,\vec{a}_i+\vec{\alpha},\ldots,\vec{a}_k)\in X_{\vec{\alpha}}^+$ are adjacent if
 $(\vec{x}_{\vec{a}_1,\ldots,\vec{a}_k},\vec{x}_{\vec{a}_1,\ldots,\vec{a}_i+\vec{\alpha},\ldots,\vec{a}_k})$  satisfies the constraint between them. Since $\vec{\alpha}$ is a good vector, there are $(1-\epsilon)|\mathbb{F}^{kh}|/2$ edges between $X_{\vec{\alpha}}^-$ and $X_{\vec{\alpha}}^+$. These edges form a matching $M$.
Observe that $\min\{|A\cap X_1|,|A\cap X_2|\}\ge (1/2-\epsilon)|\mathbb{F}^{kh}|$. Since the size of matching $M$ is  $(1-\epsilon)|\mathbb{F}^{kh}|/2$,  when $6\epsilon<1$,  $M$ contains an edge whose endpoints are both in $A$.  In other words, there exists $(\vec{a}_1,\ldots,\vec{a}_k)\in A$ such that $(\vec{a}_1,\ldots,\vec{a}_i+\vec{\alpha},\ldots,\vec{a}_k)\in A$ and $(\vec{x}_{\vec{a}_1,\ldots,\vec{a}_i+\vec{\alpha},\ldots,\vec{a}_k}, \vec{x}_{\vec{a}_1,\ldots,\vec{a}_k})$ satisfies the (C2) constraint. By (C2) and Lemma~\ref{lem:alphav}, we deduce that $F(\vec{\alpha},\vec{c}_i)=f(\vec{\alpha},\vec{v})$ for some unique $\vec{v}\in V_i$. To summarize, for every good vector $\vec{\alpha}$, there exists a vector $\vec{v}_{\vec{\alpha}}\in V_i$ such that $F(\vec{\alpha},\vec{c}_i)=f(\vec{\alpha},\vec{v}_{\vec{\alpha}})$.

Next, we show that there are at most $(1-\epsilon) |\mathbb{F}^h|$ good vectors, and hence at most $(1-\epsilon)+\epsilon(1-\epsilon)=(1-\epsilon^2)$-fraction of constraints of   (C2)-$i$-type are satisfied. Otherwise, pick an arbitrary good vector $\vec{\alpha}$.   All the vectors in $\mathbb{F}^h-\{0,\vec{\alpha}\}$ can be partitioned into two sets $X_1$ and $X_2$ such that $X_1=X_2+\vec{\alpha}$. There exists $Y_1\subseteq X_1$  such that $Y_1$ is a set of good vectors and $|Y_1|\ge (1/2-\epsilon)|\mathbb{F}^h|$. Since $\vec{c_i}\neq g(\vec{v}_{\vec{\alpha}})$, there is a set $X$ of size at most $|\mathbb{F}|^{h-1}$ such that for all  $\vec{x}\in \mathbb{F}^h\setminus X$,  $F(\vec{x},\vec{c_i})\neq F(\vec{x},g(\vec{v}_{\vec{\alpha}}))$. Since  $(1-\epsilon)|\mathbb{F}^h|$  vectors are good, there exists a set $Z\subseteq Y_2=Y_1+\vec{\alpha}$ such that $|Z|\le\epsilon|\mathbb{F}|^h$ and all the bad vectors of $Y_2$ are in $Z$. When $2\epsilon+1/|\mathbb{F}|<1/2$, we have $|Y_1|-|X|-|Z|>0$. Thus there exists a vector $\vec{a}'\in Y_1-X-(Z+\vec{\alpha})$. According to the definitions, $\vec{\alpha}'$ and  $\vec{\alpha}+\vec{\alpha}'$ are good and $\vec{v}_{\vec{\alpha}}\neq\vec{v}_{\vec{\alpha}'}$. Since $\vec{\alpha}+\vec{\alpha}'$ is good, there exists $\vec{u}\in V_i$ such that $F(\vec{\alpha}'+\vec{\alpha},\vec{c}_i)=f(\vec{\alpha}'+\vec{\alpha},\vec{u})$. Note that  $\vec{u}\neq\vec{v}_{\vec{\alpha}}$, otherwise by $F(\vec{\alpha},\vec{c}_i)=f(\vec{\alpha},\vec{v}_{\vec{\alpha}})$, we can deduce that $F(\vec{\alpha}',\vec{c}_i)=f(\vec{\alpha}',\vec{u})$, which implies $\vec{u}=\vec{v}_{\vec{\alpha}'}\neq\vec{v}_{\vec{\alpha}}$, that is impossible. So we have $f(\vec{\alpha}'+\vec{\alpha},\vec{u})=F(\vec{\alpha},\vec{c_i})+F(\vec{\alpha}',\vec{c}_i)=f(\vec{\alpha},\vec{v}_{\vec{\alpha}})+f(\vec{\alpha}',\vec{v}_{\vec{\alpha}'})$, where $\vec{\alpha},\vec{\alpha}'$ are nonzero vectors and $\vec{v}_{\vec{\alpha}},\vec{v}_{\vec{\alpha}'},\vec{u}$ are distinct,    contradicting  Lemma~\ref{lem:selfcor}.

Now assume that $\vec{c}_i=g(\vec{v}_i)$ for every $i\in[k]$. Note that $\vec{v}_1+\cdots+\vec{v}_k\neq\vec{t}$. By Lemma~\ref{lem:difft}, $g(\sum_{i\in[k]}\vec{v}_i)\neq g(\vec{t})$. 
There exists a set $B\subseteq\mathbb{F}^h$ such that $|B|\ge (1-1/|\mathbb{F}|)\cdot|\mathbb{F}^h|$ and for all $\vec{\alpha}\in B$,
\[
\sum_{i\in [k]}F(\vec{\alpha},\vec{c}_i)=\sum_{i\in[k]}F(\vec{\alpha},g(\vec{v}_i))=F(\vec{\alpha},\sum_{i\in[k]}g(\vec{v}_i))\neq F(\vec{\alpha},g(\vec{t}))= f(\vec{\alpha},\vec{t}).
\]

Notice that $|A|\ge (1-\epsilon)|\mathbb{F}|^{kh}$. We have
\[
|\{(\vec{\alpha}_1,\ldots,\vec{\alpha}_k,\vec{\alpha}): (\vec{\alpha}_1,\ldots,\vec{\alpha}_k), (\vec{\alpha}_1+\vec{\alpha},\ldots,\vec{\alpha}_k+\vec{\alpha})\in A,\vec{\alpha}\in B\}|\ge (1-1/|\mathbb{F}|)\cdot (1-2\epsilon)\cdot |\mathbb{F}|^{k(h+1)}.
\]
This implies that at least $(1-1/|\mathbb{F}|)\cdot (1-\epsilon) |\mathbb{F}|^{k(h+1)}>\epsilon|\mathbb{F}|^{k(h+1)}$  constraints in (C3) are not satisfied.
\end{proof}

\noindent\textit{Construction of the  Gap-clique instance.} For every $\vec{\alpha}_1,\ldots,\vec{\alpha}_k\in\mathbb{F}^h$ and $\vec{\beta}_1,\ldots,\vec{\beta}_k\in\mathbb{F}^h$, introduce a vertex set $V_{\vec{\alpha}_1,\ldots,\vec{\alpha}_k,\vec{\beta}_1,\ldots,\vec{\beta}_k}=\{(\vec{x},\vec{y},\vec{z}):\vec{x},\vec{y},\vec{z}\in\mathbb{F}^\ell,  \vec{x}=\vec{y}+\vec{z}\}$. Each vertex in this set is corresponding to an assignment to three variables  $x_{\vec{\alpha}_1+\vec{\beta}_1,\ldots,\vec{\alpha}_k+\vec{\beta}_k}$, $x_{\vec{\alpha}_1,\ldots,\vec{\alpha}_k}$ and $x_{\vec{\beta}_1,\ldots,\vec{\beta}_k}$ which satisfies the constraint of (C1). For every variables $x_{\vec{\alpha}_1,\ldots,\vec{\alpha}_k}$ and $i\in [|\mathbb{F}|^{kh}]$, introduce a vertex set $V_{\vec{\alpha}_1,\ldots,\vec{\alpha}_k,i}=\mathbb{F}^\ell$. Each vertex in $V_{\vec{\alpha}_1,\ldots,\vec{\alpha}_k,i}$ is an assignment to the variable $x_{\vec{\alpha}_1,\ldots,\vec{\alpha}_k}$. 

Construct a graph $G'$ on vertice $(\bigcup_{\vec{\alpha}_1,\ldots,\vec{\alpha}_k,\vec{\beta}_1,\ldots,\vec{\beta}_k}V_{\vec{\alpha}_1,\ldots,\vec{\alpha}_k,\vec{\beta}_1,\ldots,\vec{\beta}_k})\cup (\bigcup_{\vec{\alpha}_1,\ldots,\vec{\alpha}_k,i}V_{\vec{\alpha}_1,\ldots,\vec{\alpha}_k,i})$. Two vertices in $G'$ are adjacent unless they are corresponding to inconsistent assignments or they do not satisfy the constraint between the variables they assigned to.

\begin{lemma}\label{lem:comp}
If the $k$-Vector-Sum instance has a solution, then the graph $G'$ contains a clique of size $2|\mathbb{F}|^{2kh}$.
\end{lemma}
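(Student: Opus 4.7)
The plan is to exhibit a clique of size $2|\mathbb{F}|^{2kh}$ explicitly, by starting from the canonical assignment associated to a $k$-Vector-Sum solution and selecting one vertex from each of the $2|\mathbb{F}|^{2kh}$ independent sets that make up $V(G')$. Suppose $\vec{v}_1\in V_1,\ldots,\vec{v}_k\in V_k$ satisfy $\sum_{i\in[k]}\vec{v}_i=\vec{t}$. I would define the assignment
\[
\sigma(x_{\vec{\alpha}_1,\ldots,\vec{\alpha}_k})=\sum_{i\in[k]}f(\vec{\alpha}_i,\vec{v}_i)\in\mathbb{F}^\ell
\]
for every $(\vec{\alpha}_1,\ldots,\vec{\alpha}_k)\in(\mathbb{F}^h)^k$. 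This is precisely the ``expected'' value from the CSP construction, and the first step of the proof of Lemma~\ref{lem:kvec2csp} already observes that such $\sigma$ satisfies every constraint in (C1), (C2), (C3); I would briefly re-verify this using the bilinearity of $f$: constraint (C1) follows because $f(\cdot,\vec{v}_i)$ is additive in its first argument; (C2) follows because $f(\vec{\alpha}_i+\vec{\alpha},\vec{v}_i)-f(\vec{\alpha}_i,\vec{v}_i)=f(\vec{\alpha},\vec{v}_i)$ with the witness $\vec{v}_i\in V_i$; and (C3) follows because $\sum_{i\in[k]}f(\vec{\alpha},\vec{v}_i)=f(\vec{\alpha},\sum_i\vec{v}_i)=f(\vec{\alpha},\vec{t})$.

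Next I would specify the $2|\mathbb{F}|^{2kh}$ vertices to form the clique. For every pair $(\vec{\alpha}_1,\ldots,\vec{\alpha}_k,\vec{\beta}_1,\ldots,\vec{\beta}_k)$, pick the vertex
\[
\bigl(\sigma(x_{\vec{\alpha}_1+\vec{\beta}_1,\ldots,\vec{\alpha}_k+\vec{\beta}_k}),\ \sigma(x_{\vec{\alpha}_1,\ldots,\vec{\alpha}_k}),\ \sigma(x_{\vec{\beta}_1,\ldots,\vec{\beta}_k})\bigr)\in V_{\vec{\alpha}_1,\ldots,\vec{\alpha}_k,\vec{\beta}_1,\ldots,\vec{\beta}_k};
\]
the (C1) identity $\sigma(x_{\vec{\alpha}+\vec{\beta}})=\sigma(x_{\vec{\alpha}})+\sigma(x_{\vec{\beta}})$ verified above shows that this triple indeed lies in $V_{\vec{\alpha}_1,\ldots,\vec{\alpha}_k,\vec{\beta}_1,\ldots,\vec{\beta}_k}$. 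For every $(\vec{\alpha}_1,\ldots,\vec{\alpha}_k)$ and $i\in[|\mathbb{F}|^{kh}]$, pick the vertex $\sigma(x_{\vec{\alpha}_1,\ldots,\vec{\alpha}_k})\in V_{\vec{\alpha}_1,\ldots,\vec{\alpha}_k,i}$. This gives $|\mathbb{F}|^{2kh}$ vertices from the (C1)-type sets and $|\mathbb{F}|^{2kh}$ vertices from the (C2)/(C3)-type sets, for $2|\mathbb{F}|^{2kh}$ vertices in total, one from each independent set (and in particular pairwise distinct as vertices).

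Finally I would verify that this set $K$ is a clique. Two vertices of $G'$ are non-adjacent only if either they assign different $\mathbb{F}^\ell$-values to the same underlying CSP variable $x_{\vec{\gamma}_1,\ldots,\vec{\gamma}_k}$, or if they together falsify a (C2) or (C3) constraint between the variables they represent. For the first possibility: every coordinate of every chosen vertex equals $\sigma$ evaluated at the corresponding variable, so any two occurrences of the same variable agree. For the second: any (C2) or (C3) constraint, when evaluated on the values coming from $\sigma$, is satisfied by the computation in the first paragraph. Hence $K$ induces a clique of size $2|\mathbb{F}|^{2kh}$, as required. The step is essentially a bookkeeping verification rather than a technical obstacle; the only mild care needed is to check that our triples in each $V_{\vec{\alpha},\vec{\beta}}$ actually satisfy $\vec{x}=\vec{y}+\vec{z}$, which is handled by the same linearity of $f$ in its first argument that drives the entire reduction.
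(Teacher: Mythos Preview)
Your proof is correct and follows exactly the approach of the paper: take the canonical assignment $\sigma(x_{\vec{\alpha}_1,\ldots,\vec{\alpha}_k})=\sum_i f(\vec{\alpha}_i,\vec{v}_i)$ and pick from each of the $2|\mathbb{F}|^{2kh}$ independent sets the vertex corresponding to this assignment. The paper states this in two sentences and leaves the verification implicit, whereas you have spelled out the (C1)/(C2)/(C3) checks and the consistency argument; nothing is missing or different.
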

\begin{proof}
For every vertex set, select the vertex corresponding to the assignment. According to the definition, these vertices form a clique of size $2|\mathbb{F}|^{2kh}$.
\end{proof}

\begin{lemma}\label{lem:sound}
If the $k$-Vector-Sum instance has no solution, then the graph $G'$ contains no clique of size $(1-\epsilon')2|\mathbb{F}|^{2kh}$ for some small constant $\epsilon'>0$.
\end{lemma}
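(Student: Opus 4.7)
The plan is to argue by contradiction. Assume $G'$ contains a clique $X$ with $|X|\ge(1-\epsilon')\cdot 2|\mathbb{F}|^{2kh}$, where $\epsilon'$ is a small constant to be chosen as a function of the constant $\epsilon$ from Lemma~\ref{lem:kvec2csp}. The goal is to read off $X$ an assignment $\sigma$ to the CSP variables that simultaneously violates fewer than an $\epsilon/2$-fraction of (C1), fewer than an $\epsilon^2$-fraction of every (C2)-$i$-type, and fewer than an $\epsilon$-fraction of (C3), directly contradicting Lemma~\ref{lem:kvec2csp}.

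First I would observe that every one of the $2|\mathbb{F}|^{2kh}$ vertex sets of the construction is an independent set of $G'$: any two distinct vertices within the same $V_{\vec{\alpha}_1,\ldots,\vec{\alpha}_k,i}$ assign different values to the same variable $x_{\vec{\alpha}_1,\ldots,\vec{\alpha}_k}$, and any two distinct vertices within a single $V_{\vec{\alpha}_1,\ldots,\vec{\alpha}_k,\vec{\beta}_1,\ldots,\vec{\beta}_k}$ are inconsistent on at least one variable. Hence $X$ takes at most one vertex per set, and the size assumption forces at most $2\epsilon'|\mathbb{F}|^{2kh}$ A-sets and at most $2\epsilon'|\mathbb{F}|^{2kh}$ B-sets to be absent from $X$.

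Next I would define $\sigma$: for each variable $x_{\vec{\alpha}_1,\ldots,\vec{\alpha}_k}$, if at least one of its $|\mathbb{F}|^{kh}$ A-copies contributes a vertex to $X$, set $\sigma$ to that vertex's value (the consistency edges plus cliqueness force all contributing A-copies to agree); otherwise call the variable \emph{undetermined} and assign $\sigma$ arbitrarily. Each undetermined variable leaves all $|\mathbb{F}|^{kh}$ of its A-copies absent from $X$, so the number of undetermined variables is at most $2\epsilon'|\mathbb{F}|^{kh}$. The cliqueness of $X$ now yields two useful facts: (i) if a B-set is in $X$ and all three of its variables are determined, then the chosen B-triple coincides with $\sigma$ on those variables and so its (C1) constraint is satisfied, since every vertex of a B-set satisfies (C1) by construction; (ii) if the two variables of a (C2)- or (C3)-constraint are both determined, then a corresponding pair of A-vertices lies in $X$, is adjacent in $G'$, and therefore satisfies that constraint.

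It remains to count unsatisfied constraints. The unsatisfied (C1) count is at most (absent B-sets) plus (C1 constraints incident to an undetermined variable); since each variable belongs to $3|\mathbb{F}|^{kh}$ (C1) constraints, both terms are $O(\epsilon'|\mathbb{F}|^{2kh})$, so the unsatisfied fraction is $O(\epsilon')$. For each $i$ the unsatisfied (C2)-$i$ constraints must be incident to an undetermined variable, and since each variable lies in $O(|\mathbb{F}|^h)$ (C2)-$i$ constraints, this is $O(\epsilon'|\mathbb{F}|^{(k+1)h})$ out of $\Theta(|\mathbb{F}|^{(k+1)h})$, again a fraction $O(\epsilon')$; (C3) is identical. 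Choosing $\epsilon'$ smaller than a suitable constant times $\epsilon^2$ makes all three fractions strictly smaller than the thresholds in Lemma~\ref{lem:kvec2csp}, completing the contradiction. The main obstacle is purely the bookkeeping: one needs to confirm that the edge definition of $G'$ really does propagate cliqueness into the "same value on the same variable" and "adjacent A-vertices satisfy the pairwise constraint" statements above, so that determined variables behave like a genuine assignment; once that is pinned down, the choice of $\epsilon'$ is mechanical.
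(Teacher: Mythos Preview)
Your argument is correct, and it takes a genuinely different route from the paper's. The paper does not define one global assignment from the clique; instead it fixes a single index $i\in[|\mathbb{F}|^{kh}]$ (chosen so that $|X\cap\bigcup_{\vec{\alpha}}V_{\vec{\alpha},i}|$ is large, which exists by averaging), defines $\sigma_X$ from the $A$-vertices at that one index (setting unrepresented variables to $\vec{0}_\ell$), uses the $B$-part of the clique to obtain (C1) satisfaction for $\sigma_X$, and then invokes Lemma~\ref{lem:kvec2csp} to produce a single ``bad'' vector $\vec{\alpha}$. That $\vec{\alpha}$ induces a partition $\mathbb{F}^{kh}=V_{\vec{\alpha}}^-\cup V_{\vec{\alpha}}^+$, and the edge rule between $A$-vertices forces $\min\{|X_{\vec{\alpha}}^-|,|X_{\vec{\alpha}}^+|\}$ to be small, bounding $|X\cap A_i|$ and yielding the contradiction.

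Your approach exploits the $|\mathbb{F}|^{kh}$-fold replication of each variable more directly: an undetermined variable burns all $|\mathbb{F}|^{kh}$ of its $A$-copies, so only an $O(\epsilon')$-fraction of variables can be undetermined, and then a straightforward union bound over the constraints incident to undetermined variables handles all three constraint types at once. This is cleaner and avoids the per-index partition step entirely; the paper's argument, on the other hand, makes the role of the specific $\vec{\alpha}$ witnessing failure of (C2) or (C3) explicit, which mirrors the analysis in Lemma~\ref{lem:kvec2csp} more closely. The one place you should take care when writing the proof in full is confirming that the edge rule checks \emph{every} CSP constraint between the two variables involved (not just one), so that adjacency of two $A$-vertices at possibly distinct copy-indices really does certify satisfaction of the relevant (C2) or (C3) constraint; the paper's description supports this reading.
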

\begin{proof}
Let $\epsilon$ be the constant in Lemma~\ref{lem:kvec2csp}. Pick a small $\epsilon'$ such that $4\epsilon'<\min\{\epsilon/2,\epsilon^2\}$.
Let $X$ be the clique in the graph of size larger than $(1-\epsilon')2|\mathbb{F}|^{2kh}$. We have  
\begin{equation}\label{eq:c3X}
|X\cap \bigcup_{\vec{\alpha}_1,\ldots,\vec{\alpha}_k,\vec{\beta}_1,\ldots,\vec{\beta}_k\in\mathbb{F}^h}V_{\vec{\alpha}_1,\ldots,\vec{\alpha}_k,\vec{\beta}_1,\ldots,\vec{\beta}_k}|\ge (1-2\epsilon')|\mathbb{F}|^{2kh}.
\end{equation} 
In addition, since $|X|> (1-\epsilon')2|\mathbb{F}|^{2kh}$,  there exists an index $i\in[|\mathbb{F}^{kh}|]$ such that $X$ contains more than  $(1-2\epsilon')|\mathbb{F}^{kh}|$ vertices with respect to index $i$. We will prove that this is impossible using the following Claim 1.

\medskip
 
\noindent\textit{Claim 1.}  For every $i\in[|\mathbb{F}^{kh}|]$, $|X\cap \bigcup_{\vec{\alpha}_1,\ldots,\vec{\alpha}_k\in\mathbb{F}^h}V_{\vec{\alpha}_1,\ldots,\vec{\alpha}_k,i}|\le (1-2\epsilon')|\mathbb{F}^{kh}|$.

\medskip

\noindent\textit{Proof of Claim 1.}
Fix an index $i\in[|\mathbb{F}|^{kh}]$. Define an assignment $\sigma_X$ as follows. For every  $\vec{\alpha}_1,\ldots,\vec{\alpha}_k\in\mathbb{F}^h$, $\sigma_X(x_{\vec{\alpha}_1,\ldots,\vec{\alpha}_k})=v$ if $\{v\}= X\cap V_{\vec{\alpha}_1,\ldots,\vec{\alpha}_k,i}$, otherwise $\sigma_X(x_{\vec{\alpha}_1,\ldots,\vec{\alpha}_k})=\vec{0}_\ell$.
By (\ref{eq:c3X}) and the definition of edge set, $\delta_X$ satisfies $(1-2\epsilon')$-fraction of constraints in (C1). By Lemma~\ref{lem:kvec2csp}, either $\sigma_X$ satisfies at most $(1-4\epsilon')$-fraction of constraints in (C3) or there exists an $j\in[k]$ such that $\sigma_X$ satisfies at most $(1-4\epsilon')$-fraction of constraints of (C2)-$j$ type.

\begin{itemize}
\item  Assume that  $\sigma_X$ satisfies at most $(1-4\epsilon')$-fraction of constraints in (C3).
We say a vector $\vec{\alpha}\in \mathbb{F}^h$ is $\delta$-good if more than $\delta$-fraction of (C3)-$\vec{\alpha}$-type constraints  are satisfied by $\sigma_X$. There exists a vector $\vec{\alpha}\in \mathbb{F}^h$ that is not $(1-4\epsilon')$-good, otherwise $\delta_X$ satisfies more than $(1-4\epsilon')$-fraction of constraints in (C3), contradicting  our assumption. Now consider a partition $\mathbb{F}^{kh}=V_{\vec{\alpha}}^-\cup V_{\vec{\alpha}}^+$ with $V_{\vec{\alpha}}^+=V_{\vec{\alpha}}^-+(\vec{\alpha},\ldots,\vec{\alpha})$.
Since $\vec{\alpha}$ is not $(1-4\epsilon')$-good, there are at most $(1-4\epsilon') |\mathbb{F}^{kh}|/2$ tuples $(\vec{\alpha}_1,\ldots,\vec{\alpha}_k)$ in $V_{\vec{\alpha}}^-$ such that $\delta_X$ satisfies the (C3) constraint between $x_{\vec{\alpha}_1,\ldots,\vec{\alpha}_k}$ and $x_{\vec{\alpha}_1+\vec{\alpha},\ldots,\vec{\alpha}_k+\vec{\alpha}}$.
Let 
\[
X_{\vec{\alpha}}^-=X\cap\bigcup_{(\vec{\alpha}_1,\ldots,\vec{\alpha}_k)\in V_{\vec{\alpha}}^-}V_{\vec{\alpha}_1,\ldots,\vec{\alpha}_k,i}
\text{ and }
X_{\vec{\alpha}}^+=X\cap\bigcup_{(\vec{\alpha}_1,\ldots,\vec{\alpha}_k)\in V_{\vec{\alpha}}^+}V_{\vec{\alpha}_1,\ldots,\vec{\alpha}_k,i}
\]
It follows that $\min\{|X_{\vec{\alpha}}^-|,|X_{\vec{\alpha}}^+|\}\le (1-4\epsilon') |\mathbb{F}|^h/2$. Thus $|X|\le (1/2+(1-4\epsilon')/2)|\mathbb{F}^{kh}|=(1-2\epsilon')|\mathbb{F}|^{kh}$.

\item Now assume that  $\sigma_X$ satisfies at most $(1-4\epsilon')$-fraction of constraints of (C2)-$j$ type for some $j\in[k]$.
Similarly, for every  $\vec{\alpha}\in \mathbb{F}^h$, we say $\vec{\alpha}$ is $\delta$-good if more than $\delta$-fraction of (C2)-$j$-$\vec{\alpha}$-type constraints are satisfied by $\sigma_X$.
There exists  $\vec{\alpha}\in \mathbb{F}^h$ that is not $(1-4\epsilon')$-good, otherwise $\delta_X$ satisfies more than $(1-4\epsilon')$-fraction of constraints of type (C2)-$j$, contradicting  our assumption. Now consider a partition $\mathbb{F}^{kh}=V_{\vec{\alpha}}^-\cup V_{\vec{\alpha}}^+$ with \[
V_{\vec{\alpha}}^+=\{(\vec{\alpha}_1,\ldots,\vec{\alpha}_i+\vec{\alpha},\ldots,\vec{\alpha}_k): (\vec{\alpha}_1,\ldots,\vec{\alpha}_k)\in V_{\vec{\alpha}}^-\}.
\]
Let 
\[
X_{\vec{\alpha}}^-=X\cap\bigcup_{(\vec{\alpha}_1,\ldots,\vec{\alpha}_k)\in V_{\vec{\alpha}}^-}V_{\vec{\alpha}_1,\ldots,\vec{\alpha}_k,i}
\text{ and }
X_{\vec{\alpha}}^+=X\cap\bigcup_{(\vec{\alpha}_1,\ldots,\vec{\alpha}_k)\in V_{\vec{\alpha}}^+}V_{\vec{\alpha}_1,\ldots,\vec{\alpha}_k,i}.
\]

Since $\vec{\alpha}$ is not $(1-4\epsilon')$-good, we have  $\min\{|X_{\vec{\alpha}}^-|,|X_{\vec{\alpha}}^+|\}\le (1-4\epsilon') |\mathbb{F}|^h/2$. Thus  $|X|\le (1/2+(1-4\epsilon')/2)|\mathbb{F}^{kh}|=(1-2\epsilon')|\mathbb{F}|^{kh}$.
\end{itemize}
\end{proof}

\subsection{Putting all together}
For any instance $(G,k)$ of $k$-Clique, we use Theorem~\ref{thm:kvec} to reduce it to an instance $(k',V_1,\ldots,V_{k'},\vec{t})$ of $k'$-Vector-Sum with $k'=\Theta(k^2)$. Then we use the reduction describe to obtain a graph $G'$ with $h=k'^2=\Theta(k^4)$ and small $\epsilon>0$ in $2^{k^{O(1)}}\cdot |G|^{O(1)}$-time. By Lemma~\ref{lem:comp} and Lemma~\ref{lem:sound}, we have
\begin{itemize}
\item if $\omega(G)\ge k$, then $\omega(G')\ge 2^{4kh+1}$,
\item if $\omega(G)<k$, then with probability at least $7/10$, $\omega(G')< (1-\epsilon)2^{4kh+1}$.
\end{itemize}
Using the graph product method, we can amplify the gap to any constant.

\subsection{Derandomization}
To derandomize the reduction, we need to construct $O(\log n+h)$ matrices $A_1,\ldots,A_\ell\in\mathbb{F}^{h\times m}$ such that the following conditions are satisfied. For all nonzero $\vec{v}\in\mathbb{F}^m=\mathbb{F}^{h\log n}$,
\begin{equation}\label{eq:difft}
(A_1\vec{v},\ldots,A_\ell\vec{v})\neq\vec{0}_{\ell h}
\end{equation}
For all distinct $\vec{v},\vec{u}\in V$ and nonzero $\vec{\alpha}\in\mathbb{F}^h$,
\begin{equation}\label{eq:alphav}
(\vec{\alpha}^T A_1\vec{v},\ldots,\vec{\alpha}^T A_\ell\vec{v})\neq (\vec{\alpha}^T A_1\vec{u},\ldots,\vec{\alpha}^T A_\ell\vec{u}) 
\end{equation}
For all distinct $\vec{v},\vec{u},\vec{w}\in V$ and nonzero $\vec{\alpha},\vec{\alpha'}\in\mathbb{F}^h$
\begin{equation}\label{eq:selfcord}
(\vec{\alpha}^T A_1(\vec{v}+\vec{w}),\ldots,\vec{\alpha}^T A_\ell(\vec{v}+\vec{w}))\neq (\vec{\alpha'}^T A_1(\vec{u}+\vec{w}),\ldots,\vec{\alpha'}^T A_\ell(\vec{u}+\vec{w})) 
\end{equation}
Let $A_i\in\mathbb{F}^{h\times m}$ be the matrix such that $A_i\vec{v}$ is the projection of $\vec{v}$ onto  the its subvector  with coordinates between $1+(i-1)h$ and $ih$. Then $A_1,\ldots,A_{\log n}$ satisfy (\ref{eq:difft}). It remains to construct another $O(\log n+h)$ matrices $A_1',\ldots,A_{\ell}'\in\mathbb{F}^{h\times m}$ satisfying (\ref{eq:alphav}) and (\ref{eq:selfcord}). Then their union $A_1,\ldots,A_{\log n},A_1',\ldots,A_{\ell}'$ would satisfy all the conditions. Note that we can think of a matrix in $\mathbb{F}^{h\times m}$ as an $hm$-dimension vector. The task   can be formulated as given $N= |\mathbb{F}|^{2h}\cdot n^{O(1)}$ vectors $C_1,\ldots,C_N\in \mathbb{F}^{hm}$, find $O(\log n+h)$ vectors $A_1',\ldots,A_{\ell}'\in\mathbb{F}^{hm}$ such that for every $i\in[N]$, there exists  $A_j'$ such that $A_j'\cdot C_i\neq 0$.

We show that, in $N^{O(1)}$-time, a vector $A\in\mathbb{F}^{hm}$ can be found such that there are at most $N/|\mathbb{F}|$ indices $i\in[N]$ satisfying $A\cdot C_i=0$. Then we apply this algorithm $\log N/\log |\mathbb{F}|$ times to obtain the vectors $A_1',\ldots,A_{\ell}'$. The vector $A$ can be found using the method of conditional probabilities~\cite{jukna2011extremal,alon2004probabilistic}. 
Let $A$ be a  vector with $A[i]$ selected randomly and independently from $\mathbb{F}$. Define a random variable $X=|\{i\in[N]:A\cdot C_i=0\}|$. We have $E[X]=N/|\mathbb{F}|$. For $a_1,\ldots,a_i\in\mathbb{F}$, let $X|a_1,\ldots,a_i= |\{i\in[N]:A\cdot C_i=0,A[1]=a_1,\ldots,A[i]=a_i\}|$. We have
\[
E[X|a_1,\ldots,a_i]=\sum_{x\in\mathbb{F}}E[X|a_1,\ldots,a_i,x]/|\mathbb{F}|\ge \min\{E[X|a_1,\ldots,a_i,x]:x\in\mathbb{F}\}.
\]
For each $i\in[hm]$, $E[X|a_1,\ldots,a_i]$ can be computed in $N^{O(1)}$-time. For each $i\in[hm]$, we pick the value $a_i$ to minimize  $E[X|a_1,\ldots,a_i]$. We have $E[X|a_1,\ldots,a_{hm}]\le N/|\mathbb{F}|$ and the vector $A$ with $A[i]=a_i$($i\in[hm]$) is our target vector.

\section{Conclusion}
This paper constructs a PCP verifier  which always accepts  yes-instances and with probability $\Theta(1/k)$ rejects  no-instances for a $W[1]$-hard problem and  shows how to create a constant  gap for $\omega(G)$ using  this PCP. I hope that the technique of this paper will help  obtain a parameterized version of PCP theorem, e.g. the  Parameterized Inapproximability Hypothesis (PIH)~\cite{pih}. 





\bibliography{ref}

\newcommand{\etalchar}[1]{$^{#1}$}
\begin{thebibliography}{ALM{\etalchar{+}}98}

\bibitem[ALM{\etalchar{+}}98]{arora1998proof}
Sanjeev Arora, Carsten Lund, Rajeev Motwani, Madhu Sudan, and Mario Szegedy.
\newblock Proof verification and the hardness of approximation problems.
\newblock {\em Journal of the ACM (JACM)}, 45(3):501--555, 1998.

\bibitem[ALW13]{abboud2013exact}
Amir Abboud, Kevin Lewi, and Ryan Williams.
\newblock On the parameterized complexity of k-sum.
\newblock {\em CoRR, abs/1311.3054}, 2013.

\bibitem[AS98]{arora1998probabilistic}
Sanjeev Arora and Shmuel Safra.
\newblock Probabilistic checking of proofs: A new characterization of {NP}.
\newblock {\em Journal of the ACM (JACM)}, 45(1):70--122, 1998.

\bibitem[AS04]{alon2004probabilistic}
Noga Alon and Joel~H Spencer.
\newblock {\em The probabilistic method}.
\newblock John Wiley \& Sons, 2004.

\bibitem[BEKP15]{bonesc13}
Edouard Bonnet, Bruno Escoffier, Eun~Jung Kim, and Vangelis~Th Paschos.
\newblock On subexponential and fpt-time inapproximability.
\newblock {\em Algorithmica}, 71(3):541--565, 2015.

\bibitem[BGLR93]{bellare1993efficient}
Mihir Bellare, Shafi Goldwasser, Carsten Lund, and Alexander Russell.
\newblock Efficient probabilistically checkable proofs and applications to
  approximations.
\newblock In {\em Proceedings of the twenty-fifth annual ACM symposium on
  Theory of computing}, pages 294--304, 1993.

\bibitem[BLR93]{blum1993self}
Manuel Blum, Michael Luby, and Ronitt Rubinfeld.
\newblock Self-testing/correcting with applications to numerical problems.
\newblock {\em Journal of computer and system sciences}, 47(3):549--595, 1993.

\bibitem[BS94]{bellare1994improved}
Mihir Bellare and Madhu Sudan.
\newblock Improved non-approximability results.
\newblock In {\em Proceedings of the twenty-sixth annual ACM symposium on
  Theory of computing}, pages 184--193, 1994.

\bibitem[CCK{\etalchar{+}}17]{chalermsook2017gap}
Parinya Chalermsook, Marek Cygan, Guy Kortsarz, Bundit Laekhanukit, Pasin
  Manurangsi, Danupon Nanongkai, and Luca Trevisan.
\newblock From gap-{ETH} to {FPT}-inapproximability: Clique, dominating set,
  and more.
\newblock In {\em Foundations of Computer Science (FOCS), 2017 IEEE 58th Annual
  Symposium on}, pages 743--754. IEEE, 2017.

\bibitem[CGG06]{chegro07}
Yijia Chen, Martin Grohe, and Magdalena Gr{\"u}ber.
\newblock On parameterized approximability.
\newblock In {\em International Workshop on Parameterized and Exact
  Computation}, pages 109--120. Springer, 2006.

\bibitem[CL19]{chen2019constant}
Yijia Chen and Bingkai Lin.
\newblock The constant inapproximability of the parameterized dominating set
  problem.
\newblock {\em SIAM Journal on Computing}, 48(2):513--533, 2019.

\bibitem[DF95]{downey1995fixed}
Rodney~G Downey and Michael~R Fellows.
\newblock Fixed-parameter tractability and completeness {II}: On completeness
  for {W}[1].
\newblock {\em Theoretical Computer Science}, 141(1-2):109--131, 1995.

\bibitem[DF99]{dowfel99}
Rodney~G Downey and Michael~R Fellows.
\newblock {\em Parameterized Complexity}.
\newblock Springer-Verlag, 1999.

\bibitem[DF13]{downey2013fundamentals}
Rodney~G Downey and Michael~R Fellows.
\newblock {\em Fundamentals of parameterized complexity}, volume~4.
\newblock Springer, 2013.

\bibitem[Din16]{dinur2016mildly}
Irit Dinur.
\newblock Mildly exponential reduction from gap 3sat to polynomial-gap
  label-cover.
\newblock In {\em Electronic Colloquium on Computational Complexity (ECCC)},
  volume~23, 2016.

\bibitem[FG06]{flugro06}
J{\"o}rg Flum and Martin Grohe.
\newblock {\em Parameterized Complexity Theory}.
\newblock Springer, 2006.

\bibitem[FGL{\etalchar{+}}96]{feige1996interactive}
Uriel Feige, Shafi Goldwasser, Laszlo Lov{\'a}sz, Shmuel Safra, and Mario
  Szegedy.
\newblock Interactive proofs and the hardness of approximating cliques.
\newblock {\em Journal of the ACM (JACM)}, 43(2):268--292, 1996.

\bibitem[FGMS]{fellowsdata}
Michael~R Fellows, Jiong Guo, D{\'a}niel Marx, and Saket Saurabh.
\newblock Data reductions and problem kernels.
\newblock In {\em Dahstuhl Seminar}, number 12241. Citeseer.

\bibitem[FK00]{feige2000two}
Uriel Feige and Joe Kilian.
\newblock Two-prover protocols---low error at affordable rates.
\newblock {\em SIAM Journal on Computing}, 30(1):324--346, 2000.

\bibitem[FKLM20]{feldmann2020survey}
Andreas~Emil Feldmann, C.S. Karthik, Euiwoong Lee, and Pasin Manurangsi.
\newblock A survey on approximation in parameterized complexity: Hardness and
  algorithms.
\newblock {\em Algorithms}, 13(6):146, 2020.

\bibitem[Gol98]{goldwasser1998introduction}
Shafi Goldwasser.
\newblock Introduction to special section on probabilistic proof systems.
\newblock {\em SIAM Journal on Computing}, 27(3):737, 1998.

\bibitem[Gol16]{goldreich2016lecture}
Oded Goldreich.
\newblock Lecture notes on linearity (group homomorphism) testing.
\newblock 2016.

\bibitem[Has96]{hastad1996clique}
Johan Hastad.
\newblock Clique is hard to approximate within $n^{1-\epsilon}$.
\newblock In {\em Proceedings of 37th Conference on Foundations of Computer
  Science}, pages 627--636. IEEE, 1996.

\bibitem[HKK13]{hajkhakor13}
M.~T. Hajiaghayi, R.~Khandekar, and G.~Kortsarz.
\newblock Fixed parameter inapproximability for {Clique} and {SetCover} in time
  super-exponential in {OPT}.
\newblock {\em CoRR}, abs/1310.2711, 2013.

\bibitem[IPZ01]{impagliazzo2001problems}
Russell Impagliazzo, Ramamohan Paturi, and Francis Zane.
\newblock Which problems have strongly exponential complexity?
\newblock {\em Journal of Computer and System Sciences}, 63(4):512--530, 2001.

\bibitem[Juk11]{jukna2011extremal}
Stasys Jukna.
\newblock {\em Extremal combinatorics: with applications in computer science}.
\newblock Springer Science \& Business Media, 2011.

\bibitem[Kar72]{kar72}
R.~M. Karp.
\newblock Reducibility among combinatorial problems.
\newblock In {\em Proceedings of a symposium on the Complexity of Computer
  Computations, held March 20-22, 1972, at the {IBM} Thomas J. Watson Research
  Center, Yorktown Heights, New York.}, pages 85--103, 1972.

\bibitem[Kay14]{kayal2014solvability}
Neeraj Kayal.
\newblock Solvability of systems of polynomial equations over finite fields.
\newblock {\em A talk given by Neeraj Kayal at the Simons Institute for the
  Theory of Computing, Berkeley, CA [Accessed: 2017/20/7]}, page~1, 2014.

\bibitem[KLN21]{karthik2020hardness}
C.S. Karthik and Inbal Livni-Navon.
\newblock On hardness of approximation of parameterized set cover and label
  cover: Threshold graphs from error correcting codes.
\newblock {\em SOSA}, 2021.

\bibitem[KS16]{khot2016hardness}
Subhash Khot and Igor Shinkar.
\newblock On hardness of approximating the parameterized clique problem.
\newblock In {\em Proceedings of the 2016 ACM Conference on Innovations in
  Theoretical Computer Science}, pages 37--45, 2016.

\bibitem[Lin18]{lin2018parameterized}
Bingkai Lin.
\newblock The parameterized complexity of the k-biclique problem.
\newblock {\em Journal of the ACM (JACM)}, 65(5):1--23, 2018.

\bibitem[Lin19]{lin2019simple}
Bingkai Lin.
\newblock A simple gap-producing reduction for the parameterized set cover
  problem.
\newblock In {\em 46th International Colloquium on Automata, Languages, and
  Programming (ICALP 2019)}. Schloss Dagstuhl-Leibniz-Zentrum fuer Informatik,
  2019.

\bibitem[LKM19]{karthik2017parameterized}
Bundit Laekhanukit, C.S. Karthik, and Pasin Manurangsi.
\newblock On the parameterized complexity of approximating dominating set.
\newblock {\em Journal of the ACM (JACM)}, 66(5):33, 2019.

\bibitem[LRSZ20]{pih}
Daniel Lokshtanov, MS~Ramanujan, Saket Saurab, and Meirav Zehavi.
\newblock Parameterized complexity and approximability of directed odd cycle
  transversal.
\newblock In {\em Proceedings of the Fourteenth Annual ACM-SIAM Symposium on
  Discrete Algorithms}, pages 2181--2200. SIAM, 2020.

\bibitem[Mar08]{marx08}
D{\'a}niel Marx.
\newblock Parameterized complexity and approximation algorithms.
\newblock {\em The Computer Journal}, 51(1):60--78, 2008.

\bibitem[MR16]{manurangsi2016birthday}
Pasin Manurangsi and Prasad Raghavendra.
\newblock A birthday repetition theorem and complexity of approximating dense
  csps.
\newblock {\em arXiv preprint arXiv:1607.02986}, 2016.

\bibitem[W{\l}o20]{wlodarczyk2020parameterized}
Micha{\l} W{\l}odarczyk.
\newblock Parameterized inapproximability for steiner orientation by gap
  amplification.
\newblock In {\em 47th International Colloquium on Automata, Languages, and
  Programming (ICALP 2020)}. Schloss Dagstuhl-Leibniz-Zentrum f{\"u}r
  Informatik, 2020.

\bibitem[Zuc06]{zuckerman2006linear}
David Zuckerman.
\newblock Linear degree extractors and the inapproximability of max clique and
  chromatic number.
\newblock In {\em Proceedings of the thirty-eighth annual ACM symposium on
  Theory of computing}, pages 681--690, 2006.

\end{thebibliography}

\section*{Appendix}
\begin{theo}[Theorem~\ref{thm:kvec} restated]
$k$-Vector-Sum with  $\mathbb{F}=\mathbb{F}_2$   and $m=\Theta(k^2\log n)$ is $W[1]$-hard parameterized by $k$.
\end{theo}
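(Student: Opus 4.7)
The plan is to reduce from Multicolored $k$-Clique. Given a graph $G$ whose vertex set is partitioned as $V=V_1\cup\cdots\cup V_k$ with $n_0=|V|$, I construct an instance of $k'$-Vector-Sum with $k'=k+\binom{k}{2}=\Theta(k^2)$ sets over $\mathbb{F}_2^m$, $m=\binom{k}{2}\cdot 2\lceil\log n_0\rceil=\Theta(k^2\log n_0)$. I view the $m$ coordinates as $\binom{k}{2}$ blocks indexed by pairs $i<j$, each split into two halves of $\lceil\log n_0\rceil$ bits; the first half encodes a vertex from $V_i$ and the second encodes a vertex from $V_j$. For each $v\in V_i$ I put into a ``vertex set'' $U_i$ the vector $\phi(v)$ that writes $\mathrm{enc}(v)$ into the $V_i$-half of every block involving color $i$ and is zero elsewhere. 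For each edge $\{u,w\}$ with $u\in V_i,\,w\in V_j,\,i<j$ I put into an ``edge set'' $U_{ij}$ the vector $\psi(u,w)$ that writes $(\mathrm{enc}(u),\mathrm{enc}(w))$ into block $(i,j)$ and is zero elsewhere. The target is $\vec{t}=\vec{0}$.

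Correctness is a direct $\mathbb{F}_2$ cancellation check. For completeness, a rainbow clique $v_1,\ldots,v_k$ gives a solution: pick $\phi(v_i)$ from each $U_i$ and $\psi(v_i,v_j)$ from each $U_{ij}$, so that in block $(i,j)$ the two copies of $\mathrm{enc}(v_i)$ cancel in the first half and the two copies of $\mathrm{enc}(v_j)$ cancel in the second. For soundness, a selection summing to $\vec{0}$ forces, block by block, $\mathrm{enc}(v_i)=\mathrm{enc}(u_{ij})$ and $\mathrm{enc}(v_j)=\mathrm{enc}(w_{ij})$; since $\psi$-vectors exist only for actual edges of $G$, this gives $\{v_i,v_j\}\in E$ for every pair, i.e.\ a rainbow clique.

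I do not expect a real obstacle; the construction is polynomial-time, uses alphabet $\mathbb{F}_2$, and the new parameter $k'=\Theta(k^2)$ is a computable function of $k$. The only care needed is to line up the two halves of each block so that $\phi$-vectors contribute to exactly one half while $\psi$-vectors fill both halves of a single block, which is what makes the cancellation work. Writing $n$ for the total number of vectors one has $n=\Theta(k^2 n_0^2)$, hence $\log n=\Theta(\log n_0)$ and the dimension is $m=\Theta(k^2\log n)$ as claimed, after padding with zero coordinates if one prefers to express the bound in terms of the Vector-Sum parameter $k'$ rather than the original Clique parameter $k$.
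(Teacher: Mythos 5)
Your proposal is correct and follows essentially the same route as the paper's appendix proof: a reduction from Multicolored $k$-Clique to $(k+\binom{k}{2})$-Vector-Sum over $\mathbb{F}_2$ with one set per color, one set per color pair, and block-wise cancellation enforcing that each chosen edge's endpoints agree with the chosen vertices. The only (cosmetic) differences are that you take the target to be $\vec{0}$ and omit the paper's selector coordinates (which are redundant given that the problem definition already forces exactly one vector per set), and you organize the coordinates into $\binom{k}{2}$ two-half blocks rather than $k$ superblocks.
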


\begin{proof}
We construct a reduction from $k$-Multi-Color-Clique to $(k+\binom{k}{2})$-Vector-Sum. Let $(G,k)$ be an instance of $k$-Multi-Color-Clique with $V(G)=V_1\cup V_2\cup\cdots\cup V_k$. Set $n=|V(G)|+1$.
For every $v\in V(G)$, let $\sigma(v)\in\{0,1\}^{\log n}$ be the binary encoding of $v$. Since $n>V(G)$, we can assume that $\sigma(v)\neq\vec{0}_{\log n}$ for all $v\in V(G)$. Let $f:\binom{[k]}{2}\to [\binom{k}{2}]$ be a bijection.
For every $i\in[k]$ and $v\in V(G)$, let
\[
\vec{\eta}_{v,i}=(\underbrace{\overbrace{0,\cdots,0}^{(i-1)\log n},\sigma(v),0,\cdots,0)}_{k\log n}\text{ and }
\vec{\gamma}_{v,i}=(\underbrace{\overbrace{\sigma(v),\ldots,\sigma(v)}^{(i-1)\log n},\vec{0}_{\log n},\sigma(v),\ldots,\sigma(v)}_{k\log n})
\]
For any distinct $i,j\in[k]$, let
\[
\vec{\theta}_{i,j}=(\underbrace{\overbrace{0,\cdots,0}^{f(\{i,j\})-1},1,0,\cdots,0)}_{k(k-1)/2}\text{ and } 
\vec{\delta}_{i}=(\underbrace{\overbrace{0,\cdots,0}^{i-1},1,0,\cdots,0)}_{k}.
\]
For every edge $e=\{v,u\}$ with $v\in V_i$ and $u\in V_j$, let
\[
\vec{w}_e=\vec{0}_k\circ\vec{\theta}_{i,j}\circ (\underbrace{\overbrace{\overbrace{0,\ldots,0}^{(i-1)k\log n},\vec{\eta}_{v,j},0,\ldots,0}^{(j-1)k\log n},\vec{\eta}_{u,i},0,\ldots,0}_{k^2\log n}).
\]
For every $v\in V_i$, let
\[
\vec{w}_v=\vec{\delta}_i\circ\vec{0}_{k(k-1)/2}\circ (\underbrace{\overbrace{0,\ldots,0}^{(i-1)k\log n},\vec{\gamma}_{v,i},0,\ldots,0}_{k^2\log n})
\]
The instance of vector sum is defined as follows.
\begin{itemize}
\item The target vector is $\vec{t}=\vec{1}_{k+k(k-1)/2}\circ\vec{0}_{k^2\log n}$.
\item There are $k(k-1)/2+k$ sets of vectors. 
\begin{itemize}
\item For every $\{i,j\}\in\binom{[k]}{2}$, let
\[
W_{ij}=\{\vec{w}_e : \text{$e=\{v,u\}$ is an edge in $G$ with $v\in V_i$ and $u\in V_j$}\}.
\]
\item For every $i\in[k]$, let
\[
W_i=\{\vec{w}_v : v\in V_i\}.
\]
\end{itemize}
\end{itemize}
If $(G,k)$ is a yes-instance, then there exist $v_1\in V_1,\ldots,v_k\in  V_k$ such that $\{v_1,\ldots,v_k\}$ induces a $k$-clique in $G$. It is easy to check that the sum of $\vec{w}_{v_iv_j}$'s and $\vec{w}_{v_i}$'s is equal to $\vec{t}$.

On the other hand, if there exist $\vec{w}_{ij}\in W_{ij}$ and $\vec{w}_i\in W_i$ such that
\[
\sum \vec{w}_{ij}+\sum \vec{w}_i=\vec{t}.
\]
Each $w_i$ is corresponding to a vertex $v_i\in V_i$. 
Each $w_{ij}$ is corresponding to an edge $e_{ij}$ between $V_i$ and $V_j$.
It is easy to see that $v_i$ is an endpoint of $e_{ij}$ for all $j\in [k]\setminus\{i\}$. Therefore they form a clique of size $k$.
\end{proof}
\end{document}